\documentclass[10pt,twocolumn,twoside]{IEEEtran}
\usepackage[all]{xy}
\usepackage{color}
\usepackage{algorithm}
\usepackage{algpseudocode}
\usepackage{subfigure}
\usepackage{bm}
\usepackage{graphicx}
\usepackage{color}
\usepackage{wrapfig}
\usepackage{epsf}
\usepackage{times,mathptm}
\usepackage{url}
\usepackage{amssymb, amsmath, mathtools, amsthm}
\usepackage{nomencl}
\usepackage{comment}

\usepackage{enumitem}
\bibliographystyle{IEEEtran}
\makenomenclature
\newtheorem{theorem}{Theorem}

\newtheorem{corollary}{Corollary}

\definecolor{RED}{rgb}{0.6,0.,0.}
\definecolor{BLUE}{rgb}{0.,0.,0.6}
\definecolor{GREEN}{rgb}{0.,0.6,0.}
\definecolor{MALINA}{rgb}{0.6,0.,0.6}
\definecolor{YELLOW}{rgb}{0.8,0.8,0}
\newcommand{\squeezeup}{\vspace{-1.5 mm}}
\begin{document}
\title{Joint Estimation of Topology \& Injection Statistics in Distribution Grids with Missing Nodes}
\author{\IEEEauthorblockN{Deepjyoti~Deka, Michael~Chertkov, and Scott~Backhaus\\}
\IEEEauthorblockA{Los Alamos National Laboratory, New Mexico, USA}
\thanks{D. Deka, and M. Chertkov are with the Theory Division and the Center for Nonlinear Systems of LANL, Los Alamos, NM 87544. Email: deepjyoti, chertkov @lanl.gov. S. Backhaus is with the A Division of LANL, Los Alamos, NM 87544. Email: backhaus@lanl.gov}
\thanks{This work was supported by U.S. Department of Energy’s Office of Electricity as part of the DOE Grid Modernization Initiative and the Center for Non Linear Studies at Los Alamos National Laboratory.}}
\maketitle
\begin{abstract}
Optimal operation of distribution grid resources relies on accurate estimation of its state and topology. Practical estimation of such quantities is complicated by the limited presence of real-time meters. This paper discusses a theoretical framework to jointly estimate the operational topology and statistics of injections in radial distribution grids under limited availability of nodal voltage measurements. In particular we show that our proposed algorithms are able to provably learn the exact grid topology and injection statistics at all unobserved nodes as long as they are not adjacent. The algorithm design is based on novel ordered trends in voltage magnitude fluctuations at node groups, that are independently of interest for radial physical flow networks. The complexity of the designed algorithms is theoretically analyzed and their performance validated using both linearized and non-linear AC power flow samples in test distribution grids.
\end{abstract}
\begin{IEEEkeywords}
Distribution grid, linear flows, spanning tree, missing nodes, load estimation, complexity, clustering
\end{IEEEkeywords}
\section{Introduction}
\label{sec:intro}
Distribution grid is the part of the power grid network from the distribution substation to the loads and end-users. Often the distribution grid is structured as a radial tree with the substation node as root and load buses/buses powered by the substation located elsewhere. This radial topology is constructed by switching on and off breakers from a set of candidate lines. The optimal operation of smart grids depends on the accurate real-time estimate of the operational topology as well as of the statistics of disturbance/variation in consumption at different grid nodes. However such estimation problems are not straightforward due to the low deployment of real-time meters in the distribution grid \cite{hoffman}. In recent years, there has been a growing adoption of certain `nodal' meters on the distribution side. Examples include distribution PMUs, micro-PMUs \cite{micropmu}, FNETs \cite{fnet}. Additionally, some smart devices at end-user nodes are capable of measuring nodal quantities like voltages for their primary operation. In this paper, we study the problem of structure and statistical estimation in distribution grids using such nodal measurements available only at a subset of the grid nodes - remaining nodes being unobserved/`missing'. Moving forward into the regime of higher meter placement, incomplete observability may still be an issue for third-parties due to privacy concerns and encrypted measurements. As the number of possible radial networks that can be constructed from a set of candidate edges can scale exponentially with its size, brute force methods for topology identification and subsequently injection estimation are avoided. Instead we focus on designing \emph{computationally efficient} theoretical learning algorithms for exact recovery despite the presence of \emph{missing nodes} in the grid.

\subsection{Prior Work}
Learning and estimation in power grids and radial distribution grids in particular has attracted significant attention in recent years. The prior work can be distinguished based on methodology used, assumptions and measurements involved. For available line measurements, \cite{ramstanford} uses maximum likelihood tests for estimating the operational topology using cycles basis. For available nodal voltage measurements, \cite{usc, bolognani2013identification, dekapscc, dekairep, ram_loopy} use properties of the graphical model of voltage measurements to identify the operational topology. Similarly, properties of graphical models in dynamical systems that represent swing dynamics in power grids have been used in grid identification in \cite{sauravacc,sauravacm}. \cite{dekatcns,dekasmartgridcomm} use properties of second moments of voltage magnitudes measurements to identify the radial topology through iterative algorithms that build the tree from leaves to the root. For availability of both voltage and injection measurements, \cite{cavraro2018graph,sejunpscc} design algorithms for topology and parameter (line impedance) identification that considers missing nodes. In agnostic data-driven efforts, topology and parameter identification techniques using machine learning techniques have been discussed in \cite{berkeley,arya}.

An important feature of the majority of cited work based on voltage measurement samples is that exact learning algorithms are only provided for cases with sufficient nodal observability (i.e. without missing nodes). In prior work that considers missing nodes \cite{dekatcns, dekaecc}, topology learning algorithms are designed but require historical knowledge of injections statistics at all nodes, including missing nodes. Such estimates may be unreliable or not present in reality. Further, the hidden nodes are assumed to be separated by three or more hops in the operational grid. We relax both these drawbacks in this paper. In a different setting, \cite{cavraro2018graph,sejunpscc} require both injection and voltage samples at the observed nodes. Availability of real-time injection samples may have stronger consequences for end-user privacy \cite{privacy}. In this work, we consider a setting where samples of nodal voltages and statistics of injections (not samples) are available only at the observed nodes, while missing nodes are two or more hops away (ie. non-adjacent). Our algorithms are able to learn the exact grid topology and estimate the injection statistics at the missing nodes.

\subsection{Technical Contribution}
We consider estimation in partially observed radial grids using time-stamped voltage samples and injection statistics collected from a subset of nodes. Operational edges are selected from among a larger set of permissible edges with known impedances. Under the assumption that missing nodes are non-adjacent and have a degree greater than two, we present learning algorithms to estimate the operational grid topology and estimate the injection statistics at the missing nodes. Based on a linearized AC power flow model \cite{dekatcns,89BWa,89BWb,bolognani2016existence}, we determine relations (equalities and inequalities) between second moments of voltages at groups of two and three nodes that enable guaranteed estimation. We first consider the case with no missing nodes and provide the theoretical sample and computational complexity of a spanning tree based Algorithm \ref{alg:1} originally presented in \cite{dekaecc} that uses only voltage magnitude samples at all nodes to learn the operational topology. We demonstrate through simulations that this algorithm has improved performance at low sample sizes over prior work \cite{dekatcns}. Further we discuss theoretical limitations of Algorithm \ref{alg:1} when missing nodes exist in the system. Next we consider the case where missing nodes are three hops away and present Algorithm \ref{alg:2}, which incorporates additional checks to identify the missing nodes and estimate their injection statistics. Finally we present Algorithm \ref{alg:3} that is able to learn topology and statistics when hidden node are two or more hops away. Going from three to two hop separation uses clustering based on novel monotonic properties of voltages at three nodes. We show the polynomial computational complexity of the designed algorithms, and validate the algorithms on a test distribution system with non-linear AC power flow samples simulated through Matpower \cite{matpower}.

This work is the journal version of a preliminary conference paper \cite{dekasmartgridcomm18} which described Algorithm \ref{alg:3}. This work includes extended proofs of theorems, new Algorithm \ref{alg:2}, as well as theoretical results on sample and computational complexity. Further, unlike \cite{dekasmartgridcomm18}, we present realistic simulation results that enhance the practicality of our proposed algorithms.

The rest of the paper is organized as follows. Section \ref{sec:structure} introduces structural and power flow variables and models used in the remaining sections. Section \ref{sec:trends} describes relations (equalities and inequalities) of second moments of nodal voltage magnitudes that are used in algorithm design. The first learning algorithm for grids with no missing nodes is presented in Section \ref{sec:trends}, along with the analysis of its sample and computational complexity. The second and third learning algorithms for grids with missing nodes are given in Section \ref{sec:learningmissing1} along with derivation of voltage properties that enable their design. Numerical simulations of our learning algorithms on test radial networks are presented in Section \ref{sec:experiments}. Finally, Section \ref{sec:conclusions} contains conclusions and discussion of future work.

\section{Distribution Grid: Structure and Power Flows}
\label{sec:structure}
\begin{figure}[!bt]
\centering
\includegraphics[width=0.21\textwidth]{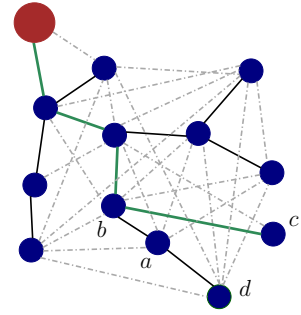}
\squeezeup\caption{Distribution grid with $1$ substation (large red node). The operational lines are solid, and non-operational lines are dashed grey. Nodes $b$ and $a$ are parent-child pair, while $b$ and $d$ are grandparent-grandchild. Node $c$ and $a$ are siblings. Nodes $c,d$ are leaves. The green edges represent path $\mathcal{P}^c$. The descendant set of node $a$ is $\mathcal{D}^a = \{a,d\}$.}\label{fig:picHinv}
\end{figure}
\textbf{Structure}: We represent radial distribution grid by the graph ${\mathcal T}=({\mathcal V},{\mathcal E})$, where ${\mathcal V}$ is the set of buses/nodes of the graph and ${\mathcal E}$ is the set of edges. We denote nodes by alphabets ($a$, $b$,...) and the edge connecting nodes $a$ and $b$ by $(ab)$. The root node of the tree represents a substation and is assumed to have a degree one. This is done for ease of notation as each sub-network emerging from the substation can be separately identified as later discussed. The edge set $\mathcal{E}$ is determined by operational lines (closed) in a set of candidate permissible edges $\mathcal{E}_{full}$. We seek to identify the set of operational edges $\mathcal{E}$ given the set of candidate edges $\mathcal{E}_{full}$. In radial grid $\mathcal{T}$, we denote the unique set of edges that connect a node $a$ to the root node by \emph{path} $\mathcal{P}^a$. We call a node $a$ to be a \textit{descendant} of another node $b$ if $\mathcal{P}^b\subset \mathcal{P}^a$ (i.e. the path from $a$ to root passes through $b$). $\mathcal{D}^a$ is used to denote the set of descendants of $a$. We include node $a$ in $\mathcal{D}^a$ by definition. If $a \in \mathcal{D}^b$ and $(ab) \in \mathcal{E}$, then $b$ and $a$ are termed \emph{parent} and \emph{child} nodes respectively. A parent of a parent is termed \emph{grandparent}. Two nodes that share the same parent are termed \emph{siblings}. Finally terminal nodes that do not have a child are termed \emph{leaves}. An illustrative example of a radial grid with operational edges selected from a candidate set is shown in Fig.~\ref{fig:picHinv} along with the graph-theoretic notations defined. Next we describe the power flow model used in this paper.

\textbf{Power Flow (PF) Model}: Each line $(ab)$ (either operational or open) is associated with a complex impedance $z_{ab}=r_{ab}+\hat{i}x_{ab}$ ($\hat{i}^2=-1$), where $r_{ab}>0$ and $x_{ab}>0$ denote line resistance and reactance respectively. Let the real and reactive injections at node $a$ be denoted by $p_a$ and $q_a$ respectively. Kirchhoff's law relates the complex AC injection at $a$ by the following power flow equation termed AC-PF (Alternating Current Power Flow).
\begin{align}
p_a+\hat{i} q_a = \underset{b:(ab)\in{\mathcal E}}{\sum}\frac{v_a^2-v_a v_b\exp(\hat{i}\theta_a-\hat{i}\theta_b)}{z_{ab}^*}.\label{P-complex1}
\end{align}
Here real valued scalars, $v_a$, $\theta_a$ are the voltage magnitude and phase respectively at node $a$. Under normal operating conditions, small deviations in voltage magnitude from nominal value ($1 p.u.$) at each node and small phase differences between neighboring nodes can be assumed and the following linearized power flow model is derived by ignoring second order terms: \cite{dekatcns,bolognani2016existence}:
\begin{align}
p_a&=\underset{b:(ab)\in{\mathcal E}}{\sum}\left(r_{ab}(v_a-v_b)+ x_{ab}(\theta_a-\theta_b)\right)/\left({x_{ab}^2+r_{ab}^2}\right),
\label{LC-PF_p}\\
q_a&=\underset{b:(ab)\in{\mathcal E}}{\sum}\left(x_{ab}(v_a-v_b) -r_{ab}(\theta_a-\theta_b)\right)/\left({x_{ab}^2+r_{ab}^2}\right)
\label{LC-PF_q}
\end{align}
We term Eqs.~(\ref{LC-PF_p},\ref{LC-PF_q}) as LC-PF (Linear Coupled Power Flow). Note that the active and reactive injections in LC-PF are linear functions of differences in nodal voltage magnitudes and phases. Thus the equations are satisfied if the voltage and phase at all buses are measured relative to some reference bus. Here we consider the substation root node as reference bus with magnitude $1 p.u.$ and phase $0$. Further, summing each equation over all nodes gives $0$. Thereby LC-PF is lossless. Without a loss of generality, we can thus restrict LC-PF analysis to a reduced system without the reference node. This is similar to work in similar lossless models as LinDistFlow \cite{89BWa} or DC power flow. The reduced system is in fact invertible and enables us to express voltages in terms of injections as noted below:
\begin{align}
v = H^{-1}_{1/r}p + H^{-1}_{1/x}q~~ \theta = H^{-1}_{1/x}p - H^{-1}_{1/r}q \label{LC_PF}
\end{align}
Abusing notation, we use $v,\theta, p, q$ to denote the vector of voltage magnitude, phase, active and reactive injections respectively at the non-reference buses in the reduced system. The derivation uses basic matrix inversion. $H_{1/r}$ and $H_{1/x}$ denote the full-rank \emph{reduced weighted Laplacian matrices} for tree $\mathcal T$ where reciprocal of resistances ($1/r$) and reactances ($1/x$) are used respectively as edge weights. The reduction is achieved by removing the row and column corresponding to the reference bus in the original weighted Laplacian matrix. Simulation results on the similarity of LC-PF with non-linear AC power flow generated voltages are described in Section \ref{sec:experiments} for test cases.

For a random vector $X$. we use $\mu_{X} = \mathbb{E}[X]$ to denote its mean. For two random vectors $X$ and $Y$, the centered covariance matrix is denoted by $\Omega_{XY} = \mathbb{E}[(X-\mu_{X})(Y-\mu_{Y})^T]$. If $X=Y$, we denote the covariance matrix by $\Omega_X$.

As LC-PF is linear, we relate the means and covariances of voltage magnitudes and phases with that of the active and reactive injections.

\begin{subequations}\label{moments}
\footnotesize
\begin{align}
\mu_v &= H^{-1}_{1/r}\mu_p + H^{-1}_{1/x}\mu_q,~~ \mu_{\theta} = H^{-1}_{1/x}\mu_p - H^{-1}_{1/r}\mu_q \label{means}\\
\Omega_{v} &= H^{-1}_{1/r}\Omega_{p}H^{-1}_{1/r} + H^{-1}_{1/x}\Omega_q H^{-1}_{1/x}+H^{-1}_{1/r}\Omega_{pq}H^{-1}_{1/x} +H^{-1}_{1/x}\Omega_{qp}H^{-1}_{1/r}\label{volcovar1}\\
\Omega_{\theta} &= H^{-1}_{1/x}\Omega_{p}H^{-1}_{1/x} + H^{-1}_{1/r}\Omega_q H^{-1}_{1/r}-H^{-1}_{1/x}\Omega_{pq}H^{-1}_{1/r} -H^{-1}_{1/r}\Omega_{qp}H^{-1}_{1/x}\label{phasecovar1}\\
\Omega_{v\theta} &= H^{-1}_{1/r}\Omega_{p}H^{-1}_{1/x}-H^{-1}_{1/x}\Omega_{q}H^{-1}_{1/r} - H^{-1}_{1/r}\Omega_{pq} H^{-1}_{1/r}+H^{-1}_{1/x}\Omega_{qp}H^{-1}_{1/x} \label{volphasecovar1}
\end{align}
\end{subequations}
We look at functions of the covariance matrices in the next section and prove equality and inequality results that enable topology and statistical estimation.

\section{Properties of Voltage Second Moments}
\label{sec:trends}
At the onset, we make the following assumption regarding statistics of nodal injections at the grid nodes.

\textbf{Assumption $1$:} Active and reactive injections at different nodes are not correlated, while at the same node are non-negatively correlated. Mathematically, $\forall a,b$ non-substation nodes
\begin{align}
\Omega_{qp}(a,a) \geq 0,~\Omega_p(a,b) = \Omega_q(a,b)= \Omega_{qp}(a,b) = 0 \nonumber
\end{align}
This assumption, similar to ones in \cite{bolognani2013identification,dekapscc,ram_loopy,dekatcns}, is motivated by the fact that at short time-scales, injection fluctuations are the result of loads changes that are independent across nodes. Note that fluctuations at the same node may be aligned. Under Assumption $1$, we analyze second moments of voltages in radial grid $\mathcal{T}$. First we mention two structural results for inverse weighted Laplacian matrices that are true for radial networks, mentioned in \cite{dekatcns}.
\begin{enumerate}[leftmargin=*]
 \item For nodes $a$ and $b$ in tree $\mathcal T$,
 \begin{align}
 H_{1/r}^{-1}(a,b)= \sum_{(cd) \in {\mathcal P}^a\bigcap {\mathcal P}^b} r_{cd} \label{Hrxinv}
 \end{align}
 \item For parent node $b$ and its child $a$,
 \begin{align}
{\huge H}_{1/r}^{-1}(a,c)-{\huge H}_{1/r}^{-1}(b,c) =\begin{cases}r_{ab} & \quad\text{if node $c \in \mathcal{D}^a$}\\
0 & \quad\text{otherwise,} \end{cases} \label{Hdiff}
\end{align}
\end{enumerate}
Note that ${\mathcal P}^a\bigcap {\mathcal P}^b$ denotes the common edges on paths from nodes $a$ and $b$ to the root. The first result follows from structure of inverse reduced incidence matrices in trees. The second result follows from the first result as for parent-child pair $b,a$ and node $c \notin \mathcal{D}^a$, ${\mathcal P}^a\bigcap {\mathcal P}^c$ and ${\mathcal P}^b\bigcap {\mathcal P}^c$ are identical.

We now consider a specific non-negative function of two nodes $\phi_{ab} =\mathbb{E}[(v_a - \mu_{v_a})-(v_b-\mu_{v_b})]^2 $, which represents the variance of the difference in voltage magnitudes at nodes $a$ and $b$. Using Eq.~(\ref{volcovar1}), $\phi_{ab}$ can be expanded in terms of covariances at nodal injections as

{\footnotesize
\begin{align}
&\phi_{ab} = \Omega_{v}(a,a) + \Omega_{v}(b,b)- 2\Omega_{v}(a,b) \nonumber\\
&= \smashoperator[lr]{\sum_{d \in {\mathcal T}}}(H^{-1}_{1/r}(a,d)- H^{-1}_{1/r}(b,d))^2\Omega_p(d,d) +(H^{-1}_{1/x}(a,d)- H^{-1}_{1/x}(b,d))^2 \Omega_q(d,d) \nonumber\\
&~~+2\left(H^{-1}_{1/r}(a,d)- H^{-1}_{1/r}(b,d)\right)\left(H^{-1}_{1/x}(a,d)- H^{-1}_{1/x}(b,d)\right)\Omega_{pq}(d,d) \label{usediff_1}
\end{align}}
\begin{figure}[bt]
\centering
\hspace*{\fill}
\includegraphics[width=0.3\textwidth]{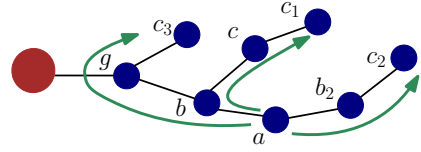}
\squeezeup
\hspace*{\fill}
\caption{Distribution grid tree for Theorem \ref{Theoremcases} illustration. Here $\phi_{ac} = \phi_{ab}+\phi_{bc}$, and $\phi_{ac_1} = \phi_{ab}+\phi_{bc_1}$, while $\phi_{ac_2} >\phi_{ab_2}+\phi_{b_2c_2}$, and $\phi_{ac_3}>\phi_{ab}+\phi_{bc_3}$.}
\label{fig:item1}
\end{figure}
The following result shows increasing trends in $\phi_{ab}$ along paths in the radial grid.
\begin{theorem} \label{Theoremcases}
For three nodes $a \neq b \neq c$ in tree $\mathcal{T}$, let the path from $a$ to $c$ pass through node $b$ in tree $\mathcal{T}$. Then
\begin{enumerate}[leftmargin =*]
 \item $\phi_{ab} + \phi_{bc} = \phi_{ac} \text{~~if~~} \mathcal{P}^c\bigcap \mathcal{P}^a = \mathcal{P}^b$
 \item $\phi_{ab} + \phi_{bc} < \phi_{ac} \text{~~if~~} \mathcal{P}^c\bigcap \mathcal{P}^a \subset \mathcal{P}^b$
\end{enumerate}
\end{theorem}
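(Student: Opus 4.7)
The plan is to reduce the theorem to a covariance calculation. Starting from \(v_a - v_c = (v_a - v_b) + (v_b - v_c)\) and expanding the variance gives
\begin{align*}
\phi_{ac} = \phi_{ab} + \phi_{bc} + 2\,\mathbb{E}\bigl[(v_a - v_b - \mu_{v_a-v_b})(v_b - v_c - \mu_{v_b-v_c})\bigr],
\end{align*}
so it suffices to show that this cross-covariance is zero in case 1 and strictly positive in case 2. The same identity can be read off algebraically from Eq.~(\ref{usediff_1}) by writing \(H^{-1}_{1/r}(a,d) - H^{-1}_{1/r}(c,d) = [H^{-1}_{1/r}(a,d) - H^{-1}_{1/r}(b,d)] + [H^{-1}_{1/r}(b,d) - H^{-1}_{1/r}(c,d)]\), doing the analogue for \(H^{-1}_{1/x}\), and expanding. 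The next tool is a closed form for voltage differences along an ancestor--descendant path: iterating Eq.~(\ref{Hdiff}) from a node \(\tilde a\) up to an ancestor \(\tilde b\) yields \(v_{\tilde a} - v_{\tilde b} = \sum_{(ij)\in \mathcal{P}^{\tilde a} \setminus \mathcal{P}^{\tilde b}} ( r_{ij}\sum_{d\in\mathcal{D}^i}p_d + x_{ij}\sum_{d\in\mathcal{D}^i}q_d)\), with \(i\) the child endpoint of each edge. Hence \(p_d\) and \(q_d\) enter \(v_{\tilde a} - v_{\tilde b}\) with \emph{strictly positive} coefficients when \(d\) lies in the subtree rooted at the child of \(\tilde b\) on the path to \(\tilde a\), and with coefficient zero otherwise.

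For case 1, where \(b\) is the most recent common ancestor of \(a\) and \(c\), the two nodes lie in distinct subtrees of \(b\). By the representation above, \(v_a - v_b\) is a linear combination of injections supported in the \(a\)-subtree of \(b\), while \(v_b - v_c\) is supported in the \(c\)-subtree; the supports are disjoint, so Assumption~1 forces the cross-covariance to vanish, giving \(\phi_{ac} = \phi_{ab} + \phi_{bc}\).

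For case 2, the true MRCA \(g\) is a strict ancestor of \(b\); by the \(a \leftrightarrow c\) symmetry we may take \(b\) to lie strictly between \(a\) and \(g\). Splitting \(v_b - v_c = (v_b - v_g) + (v_g - v_c)\) and applying the representation, the injection \(p_a\) enters \(v_a - v_b\) with a positive coefficient (sum of \(r\)'s from \(a\) to \(b\)) and enters \(v_b - v_c\) with a positive coefficient (sum of \(r\)'s from \(b\) to \(g\), all contributed by the \(v_b - v_g\) term, since the \(v_g - v_c\) term has coefficient zero on \(p_a\) because \(a\) is not in the \(c\)-subtree of \(g\)). The same holds for \(q_a\). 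Together with \(\Omega_p(a,a), \Omega_q(a,a), \Omega_{pq}(a,a) \ge 0\) from Assumption~1 and positivity of all \(r\)'s and \(x\)'s, the node-\(a\) contribution to the covariance is strictly positive under any non-degenerate fluctuation model. Every other node \(d\) either contributes zero or enters both voltage differences with same-signed coefficients (again by positivity of \(r,x\)), so contributes non-negatively. Summing yields \(\phi_{ac} > \phi_{ab} + \phi_{bc}\).

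The main obstacle is the bookkeeping in case 2: one needs to identify precisely which nodes' injections appear in both \(v_a - v_b\) and \(v_b - v_c\), and verify that their coefficients share the same sign. Once voltage differences are rewritten as positive-coefficient linear functionals of downstream aggregate injections, the sign match is automatic from the positivity of all line resistances and reactances, and Assumption~1 seals the strict inequality.
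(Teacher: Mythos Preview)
Your proof is correct and rests on the same structural ingredients as the paper (Eq.~(\ref{Hdiff}), Assumption~1, positivity of impedances), but the organization is genuinely different. The paper expands $\phi_{ab},\phi_{bc},\phi_{ac}$ directly via Eq.~(\ref{usediff_1}) and, for case~2, first treats the collinear configuration ($a$ ancestor of $b$ ancestor of $c$) using the elementary inequality $(x-y)^2+(y-z)^2\le(x-z)^2$ for monotone $x,y,z$ applied to the $H^{-1}$ entries, and then reduces the general case~2 configuration to case~1 plus the collinear case via an auxiliary node $b_1$ with $\mathcal{P}^a\cap\mathcal{P}^c=\mathcal{P}^{b_1}$. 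You instead isolate the single cross-term $2\,\mathrm{Cov}(v_a-v_b,\,v_b-v_c)$ and control its sign through the representation of an ancestor--descendant voltage difference as a nonnegative linear functional of downstream injections, handling case~2 in one stroke without the two-step reduction. Your route is a bit more probabilistic/conceptual and avoids the case split; the paper's route stays closer to the $H^{-1}$ calculus and makes the connection to Eqs.~(\ref{first})--(\ref{second}) in Theorem~\ref{Theoremcases2} more immediate. Both implicitly need a non-degeneracy assumption (some injection variance strictly positive) to get the \emph{strict} inequality in case~2; you flag this explicitly, the paper leaves it tacit.
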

The proof, originally presented in the conference paper \cite{dekaecc}, is provided in Appendix \ref{sec:proof1} for completion and use in subsequent theorems. Theorem \ref{Theoremcases} states that $\phi$ computed across any path in $\mathcal{T}$ is at least as large as the sum computed across its non-overlapping sub-paths as shown in Fig.~\ref{fig:item1}. The following theorem from \cite{dekaecc} uses this result to estimate the operational tree from the set of permissible edges $\mathcal{E}_{full}$.
\begin{theorem}\label{main}
Let each permissible edge $(ab)$ in ${\mathcal E}_{full}$ be given weight $\phi_{ab} = \mathbb{E}[(v_a-\mu_{v_a}) -(v_b-\mu_{v_b})]^2$. The operational edge set ${\mathcal E}$ is given by the minimum weight spanning tree in set ${\mathcal E}_{full}$.
\end{theorem}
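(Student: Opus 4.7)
The plan is to invoke the cycle property of minimum spanning trees: an edge lies in no MST whenever it is the unique heaviest edge in some cycle of the underlying graph. My goal is to show that every non-operational edge $(ac)\in\mathcal{E}_{full}\setminus\mathcal{E}$ is the strict maximum weight edge in the unique fundamental cycle it forms together with the tree path between $a$ and $c$ in $\mathcal{T}$. Since $\mathcal{E}$ itself constitutes a spanning tree of $\mathcal{V}$, excluding every non-operational edge from every MST forces the MST to coincide with $\mathcal{E}$.

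First I would observe that $(ac)\notin\mathcal{E}$ implies $a$ and $c$ are non-adjacent in $\mathcal{T}$, so the tree path between them, say $a=u_0,u_1,\dots,u_k=c$, has length $k\geq 2$. At each step the intermediate node $u_{j+1}$ lies on the tree path from $u_j$ to $u_k$, so Theorem~\ref{Theoremcases} (case~1 or case~2) applies and delivers $\phi_{u_j u_k}\geq \phi_{u_j u_{j+1}}+\phi_{u_{j+1}u_k}$. Peeling off one edge at a time and chaining these inequalities yields
\[
\phi_{ac}\;\geq\;\sum_{i=0}^{k-1}\phi_{u_i u_{i+1}}.
\]
Under Assumption~1 and non-degenerate injection variances each operational edge weight $\phi_{u_i u_{i+1}}$ is strictly positive (this is immediate from substituting Eq.~(\ref{Hdiff}) into Eq.~(\ref{usediff_1}), which collapses the sum over $d$ to a subtree sum with non-negative terms and at least one positive contribution). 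Consequently a sum of $k\geq 2$ strictly positive terms strictly exceeds any single summand, so $\phi_{ac}>\phi_{u_j u_{j+1}}$ for every tree edge in the fundamental cycle.

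Having shown that every non-operational edge is the strict maximum in its fundamental cycle, the MST cycle property excludes every such edge from every MST of $\mathcal{E}_{full}$. Only operational edges then remain admissible, and since $\mathcal{E}$ already has $|\mathcal{V}|-1$ edges and is connected and acyclic, it is the unique MST. The main obstacle I anticipate is the iterated bookkeeping for Theorem~\ref{Theoremcases} when the tree path bends at one or more lowest common ancestors (one must verify that the case~1/case~2 dichotomy is preserved at every peeling step), together with making precise the implicit non-degeneracy hypothesis on the injection covariances so that the weight of every operational edge is strictly positive; without this the MST is only guaranteed to contain $\mathcal{E}$ rather than equal it.
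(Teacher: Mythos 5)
Your proof is correct and takes essentially the same route the paper relies on: superadditivity of $\phi$ along tree paths from Theorem~\ref{Theoremcases}, strict positivity of operational-edge weights (Eq.~(\ref{first}), under nonzero injection fluctuations), and a standard MST cycle/exchange argument showing every non-operational permissible edge is the strict maximum of its fundamental cycle. The paper defers the formal proof to \cite{dekaecc}, but your chain of inequalities and the non-degeneracy caveat you flag mirror exactly the reasoning the paper uses implicitly (compare Theorems~\ref{thm:order1} and \ref{thm:order2} in the appendix, where the gap $k_1>0$ plays the role of your positive edge weights).
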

Theorem \ref{main} states that the exact topology of the grid can be computed using just the voltage magnitude measurements at all grid nodes. No additional information related to injection statistics are needed. If voltage phase angles are also available, the injection statistics at all nodes can be computed by inverting Eqs.~(\ref{moments}) or iteratively from leaves to the root using Eq.~(\ref{first}) described later. The steps in topology and injection statistics estimation are listed in Algorithm \ref{alg:1}, originally presented as Algorithm \ref{alg:1} in \cite{dekaecc}.
\begin{algorithm}
\caption{Learning without missing nodes}\label{alg:1}
\textbf{Input:} Voltage observations $v$, $\theta$ at all nodes, set of permissible edges ${\mathcal E}_{full}$ with line impedances.\\
\textbf{Output:} Operational edges $\mathcal{E}$, injection covariances $\Omega_p, \Omega_q, \Omega_{pq}$ at all nodes
\begin{algorithmic}[1]
\State $\forall (ab) \in {\mathcal E}_{full}$, compute $\phi_{ab}=\mathbb{E}[(v_a - \mu_{v_a})-(v_b-\mu_{v_b})]^2$
\State Find min. spanning tree from $\mathcal E$ with $\phi_{ab}$ as edge weights.
\State ${\mathcal E} \gets $ {edges in spanning tree}
\State Compute $\Omega_p, \Omega_q, \Omega_{pq}$ using Eqs.~(\ref{moments}).
\end{algorithmic}
\end{algorithm}

\textbf{Computational Complexity:} For set $\mathcal{E}_{full}$, minimum spanning tree can be found using Kruskal's Algorithm \cite{kruskal1956shortest,Cormen2001} in $O(|{\mathcal E}|\log|{\mathcal E}|)$ operations. In the worst case, where all node pairs are permissible edges, the complexity scales as $O(|{\mathcal V}|^2\log|{\mathcal V}|)$.
The next result presents the number of voltage samples necessary for accurate recovery using empirical estimates of $\phi$.
\begin{theorem} \label{theorem:sample_complexity}
For radial grid $\mathcal{T}$ with node set ${\mathcal V}$ and depth $d$, assume line impedances are bounded by non-zero values and nodal injections to be zero-mean Gaussians with bounded variance. For $0 < \eta<1$, if the number of nodal voltage magnitude samples $n$ is greater than $Cd^4|\mathcal V|^2\log(|\mathcal V|/\eta)$ for some constant $C$, then Algorithm \ref{alg:1} recovers the true topology with probability $1-\eta$.
\end{theorem}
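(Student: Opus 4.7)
The plan is to show that, with the stated sample size, the empirical edge weights $\hat\phi_{ab}$ used by Algorithm~\ref{alg:1} are simultaneously close enough to the true $\phi_{ab}$ that Kruskal's algorithm still returns the operational tree. First, I would reduce correctness to a uniform deviation bound: let
$\Delta := \min_{(ac)\in \mathcal{E}_{full}\setminus\mathcal{E}}\;\min_{e\in\mathrm{path}_{\mathcal{T}}(a,c)}(\phi_{ac}-\phi_e)$
be the smallest MST gap between a non-tree edge and a tree edge on its fundamental cycle. A standard MST perturbation argument shows that if $\max_{(ab)\in\mathcal{E}_{full}}|\hat\phi_{ab}-\phi_{ab}|<\Delta/2$, Kruskal's algorithm recovers $\mathcal{E}$ exactly. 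It therefore suffices to lower bound $\Delta$ and to upper bound the uniform deviation with probability at least $1-\eta$.

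To lower bound $\Delta$, I would iterate Theorem~\ref{Theoremcases} along the tree path $a=v_0,v_1,\ldots,v_k=c$ between the endpoints of a non-tree edge to obtain $\phi_{ac}\geq\sum_{i=0}^{k-1}\phi_{v_iv_{i+1}}$. Since $k\geq 2$, subtracting any one tree-edge weight from this sum still leaves another positive $\phi_{e'}$, so $\Delta\geq\phi_{\min}:=\min_{e\in\mathcal{E}}\phi_e$. Using the parent--child identity (\ref{Hdiff}), a tree edge $(ba)$ with $a$ the child yields
\begin{align*}
\phi_{ab}\;\geq\; r_{ab}^2\,\Omega_p(a,a)+2r_{ab}x_{ab}\,\Omega_{pq}(a,a)+x_{ab}^2\,\Omega_q(a,a),
\end{align*}
which, under the assumed positive lower bounds on impedances and injection variances (and Assumption~$1$), is at least a strictly positive constant $c_1$ depending only on those bounds.

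Next I would derive a matching upper bound on $\phi_{ab}$ that controls the variance-of-the-variance in the concentration step. Combining (\ref{Hrxinv}) with the triangle inequality, the difference $H^{-1}_{1/r}(a,d)-H^{-1}_{1/r}(b,d)$ equals a signed sum of resistances along the unique tree cycle created by adding $(ab)$, hence is bounded in absolute value by $2d\,r_{\max}$; the analogous bound holds for $H^{-1}_{1/x}$. Plugging these into the expansion (\ref{usediff_1}) and summing the resulting at most $|\mathcal V|$ terms yields $\phi_{ab}\leq \phi_{\max}$ with $\phi_{\max}=C_2|\mathcal{V}|d^2$ for a constant $C_2$ depending only on the assumed bounds.

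Finally, since each sample of $v_a^{(i)}-v_b^{(i)}$ is a zero-mean Gaussian with variance $\phi_{ab}$, the empirical estimator $\hat\phi_{ab}$ is $\phi_{ab}/n$ times a $\chi^2_n$ (or $\chi^2_{n-1}$ if the mean is also estimated) random variable, so Laurent--Massart-type tail bounds give $P(|\hat\phi_{ab}-\phi_{ab}|>\epsilon)\leq 2\exp\!\left(-c_0 n\epsilon^2/\phi_{ab}^2\right)$ for $\epsilon\leq\phi_{ab}$. Setting $\epsilon=c_1/2$, using $\phi_{ab}\leq\phi_{\max}$, and taking a union bound over the at most $|\mathcal{V}|^2$ candidate pairs in $\mathcal{E}_{full}$ drives the total failure probability below $\eta$ as soon as $n\geq C|\mathcal{V}|^2 d^4\log(|\mathcal{V}|/\eta)$, with $C$ absorbing $c_0,c_1,C_2$ and the bounds. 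The main obstacle is establishing the $|\mathcal{V}|d^2$ upper bound on $\phi_{ab}$: one must exploit (\ref{Hdiff}) carefully so that only the at most $2d$ edges of the tree cycle enter the coefficients in (\ref{usediff_1}), giving $\phi_{\max}^2=O(|\mathcal V|^2 d^4)$ and hence the claimed sample scaling.
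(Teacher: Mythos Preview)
Your proposal is correct and follows essentially the same route as the paper's proof: establish a uniform gap $\Delta\ge k_1$ between non-tree and tree edge weights (the paper's Theorems~\ref{thm:order1}--\ref{thm:order2}), upper bound the relevant variance scale by $O(d^2|\mathcal V|)$ (the paper's Theorem~\ref{thm:order3}), and then combine a Gaussian concentration inequality with a union bound over the $O(|\mathcal V|^2)$ candidate pairs. The only cosmetic difference is that you apply $\chi^2$/Laurent--Massart concentration directly to the scalar $v_a-v_b$, whereas the paper decomposes $\phi_{ab}$ into four entries of $\Omega_v$ and invokes the covariance-entry bound of~\cite{ravikumar2011high}; both yield the same $n=\Theta(d^4|\mathcal V|^2\log(|\mathcal V|/\eta))$ requirement.
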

The proof is given in Appendix \ref{sec:proofsample}. In a realistic grid, all nodes may not observed. Naive application of Algorithm \ref{alg:1} can lead to errors in topology estimation as noted in the following result.
\begin{figure}[!bt]
\centering
\hspace*{\fill}
\subfigure[]{\includegraphics[width=0.1\textwidth]{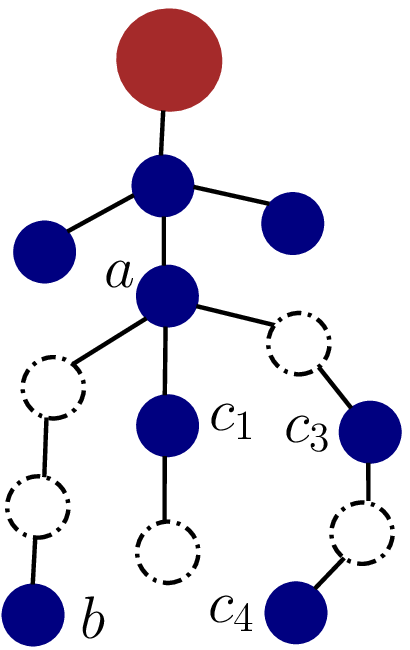}\label{fig:missing_a}}\hfill
\subfigure[]{\includegraphics[width=0.1\textwidth]{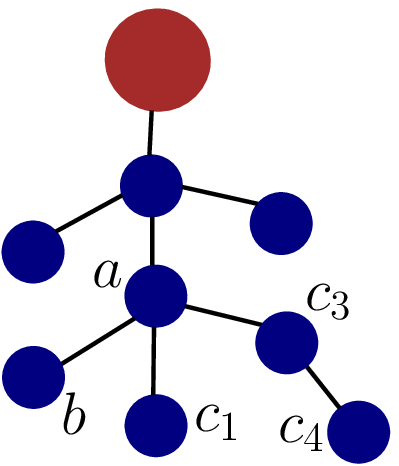}\label{fig:missing_b}}\hfill
\hspace*{\fill}
\squeezeup
\caption{(a) Distribution grid tree ${\mathcal T}$ with unobserved nodes of degree less than $3$.(b) Output of applying Algorithm \ref{alg:1}\label{fig:missing0}}
\end{figure}
\begin{theorem}\label{thm:hiddendegree1}
Consider missing nodes of degree at most $2$ in grid tree $\mathcal T$. Algorithm \ref{alg:1} using observed node voltages creates a tree ${\mathcal T}_{\mathcal M}$ where observed nodes in $\mathcal T$ separated by missing nodes are connected by spurious edges, while rest of the true edges are identified.
\end{theorem}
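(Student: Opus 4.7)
The plan is to identify the output tree $\mathcal{T}_{\mathcal{M}}$ with the contracted graph $\widehat{\mathcal{T}}$ obtained from $\mathcal{T}$ by deleting every missing leaf and replacing each maximal chain of missing degree-$2$ nodes by a single direct edge between its two observed endpoints. The vertex set of $\widehat{\mathcal{T}}$ is the set of observed nodes, and its edges are precisely of the two kinds claimed by the theorem: faithful true edges of $\mathcal{T}$ with both endpoints observed, and spurious shortcuts across missing chains. I would then show that $\widehat{\mathcal{T}}$ is the unique minimum spanning tree of the induced candidate graph on observed nodes with $\phi$-weights, from which the theorem follows immediately.

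The argument proceeds via the cut property of MSTs. Fix an edge $e = (a', b') \in \widehat{\mathcal{T}}$ corresponding in $\mathcal{T}$ to a (possibly empty) chain $a' - m_1 - \cdots - m_k - b'$ of missing degree-$2$ nodes, and without loss of generality let $a'$ be an ancestor of $b'$ in $\mathcal{T}$. Removing $e$ from $\widehat{\mathcal{T}}$ partitions the observed nodes into $S_{a'} \ni a'$ and $S_{b'} \ni b'$, which are exactly the observed nodes in the two subtrees that $\mathcal{T}$ splits into when the chain is removed. Hence for any crossing candidate edge $(c, d) \in \mathcal{E}_{full}$ with $c \in S_{a'}$, $d \in S_{b'}$, the unique $\mathcal{T}$-path from $c$ to $d$ passes through both $a'$ and $b'$. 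Applying Theorem~\ref{Theoremcases} to the triple $(c, a', d)$ gives $\phi_{cd} \geq \phi_{ca'} + \phi_{a'd} \geq \phi_{a'd}$, and since $d$ lies in the subtree below $b'$ and is therefore a descendant of $a'$, we have $a' = \mathrm{LCA}(a', d)$, which forces case~2 of Theorem~\ref{Theoremcases} when applied to $(a', b', d)$ and yields the strict inequality $\phi_{a'd} > \phi_{a'b'} + \phi_{b'd} \geq \phi_{a'b'}$. Chaining these two gives $\phi_{cd} > \phi_{a'b'}$ for every crossing edge distinct from $e$, so $e$ is the unique minimum-weight edge across its cut and must lie in every MST. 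Running this argument for every edge of $\widehat{\mathcal{T}}$ forces $\widehat{\mathcal{T}}$ to be the unique MST and hence the output $\mathcal{T}_{\mathcal{M}}$ of Algorithm~\ref{alg:1}.

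The main obstacle is the book-keeping for the two applications of Theorem~\ref{Theoremcases}: one must verify that $a'$ really lies on the $\mathcal{T}$-path from $c$ to $d$ for every possible location of $c$ in $S_{a'}$ (ancestor of $a'$, $c = a'$ itself, or descendant of $a'$ through a branch different from the chain), and that $a' = \mathrm{LCA}(a', d)$ so that case~2 kicks in. Both reduce to the observation that missing nodes having degree at most $2$ force the chain to sit on a monotone descending segment of $\mathcal{T}$, so the two subtrees after chain removal inherit the expected ancestor relations. A minor secondary check is that the argument is insensitive to the chain length $k$, which it is because the partition $(S_{a'}, S_{b'})$ depends only on which subtree contains each observed node and not on the number of intervening missing nodes.
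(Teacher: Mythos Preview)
Your argument is correct and is a fleshed-out version of the paper's three-sentence proof: both rest on Theorem~\ref{Theoremcases} to certify that each edge of the contracted tree $\widehat{\mathcal T}$ is the lightest across its cut, with the paper leaving the cut-property reasoning implicit. The one bookkeeping case your write-up skips is $d=b'$, where the triple $(a',b',d)$ degenerates and case~2 of Theorem~\ref{Theoremcases} cannot be invoked; there the strict inequality $\phi_{cd}>\phi_{a'b'}$ follows instead from $\phi_{ca'}>0$ in your first chain, since $(c,d)\neq e$ then forces $c\neq a'$.
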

\begin{proof}
Using theorem \ref{Theoremcases}, it is clear that observed neighbors in $\mathcal{T}$ will be neighbors in ${\mathcal T}_{\mathcal M}$. As missing nodes have maximum degree $2$, there is at most a line sub-graph of connected hidden nodes with observed nodes at either end (see Fig.~\ref{fig:missing0}). These observed nodes have the lowest $\phi$ among nodes separated by the hidden nodes, hence edges between them appear in the spanning tree ${\mathcal T}_{\mathcal M}$.
\end{proof}
The following result follows immediately from Theorem \ref{thm:hiddendegree1}.
\begin{corollary}
If grid $\mathcal{T}$ of $|\mathcal V|$ nodes has $k$ non-adjacent missing nodes, each of degree $2$, then Algorithm \ref{alg:1} produces a tree ${\mathcal T}_{\mathcal M}$ of $|\mathcal V|-k$ nodes with $k$ spurious edges not present in $\mathcal T$, and does not have $2k$ missing edges from $\mathcal T$.
\end{corollary}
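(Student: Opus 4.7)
The corollary is essentially a counting consequence of Theorem \ref{thm:hiddendegree1}, so my proof plan is to carefully account for nodes and edges and then verify that no two missing nodes are responsible for the same spurious or missing edge. The node count is immediate: Algorithm \ref{alg:1} operates only on observed voltages, so ${\mathcal T}_{\mathcal M}$ is spanned by $|{\mathcal V}|-k$ nodes.

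For the edge counts, I would fix an arbitrary missing node $m$. Since $m$ has degree exactly $2$ in $\mathcal T$ and missing nodes are non-adjacent, both neighbors of $m$, call them $a$ and $b$, are observed. Theorem \ref{thm:hiddendegree1} then certifies that the spanning-tree output ${\mathcal T}_{\mathcal M}$ contains the spurious edge $(ab)$ in place of the true pair $(ma), (mb)$. Thus each of the $k$ missing nodes contributes one spurious edge and eliminates two true edges from appearing in ${\mathcal T}_{\mathcal M}$.

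The main thing to check is that these contributions are disjoint across different missing nodes, so that the totals are $k$ and $2k$ rather than something smaller. For the lost edges $\{(ma),(mb)\}$, every such edge is incident to the missing node $m$; if two distinct missing nodes $m,m'$ shared a lost edge, that edge would have both endpoints missing, contradicting non-adjacency. Hence the $2k$ lost edges are distinct. For the spurious edges, suppose two missing nodes $m$ and $m'$ both induce the same spurious edge $(ab)$; then $a{-}m{-}b{-}m'{-}a$ would form a cycle in $\mathcal T$, which is impossible because $\mathcal T$ is a tree. So the $k$ spurious edges are distinct as well.

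I do not expect a genuine obstacle here; the only mildly delicate point is the disjointness argument that rules out duplicate spurious edges, and that is handled by the acyclicity of $\mathcal T$ together with non-adjacency of missing nodes. Everything else is a direct invocation of Theorem \ref{thm:hiddendegree1}.
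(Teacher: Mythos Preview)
Your proof is correct and matches the paper's approach: the paper states only that the corollary ``follows immediately from Theorem \ref{thm:hiddendegree1}'' without further argument, and your counting argument supplies exactly the details one would expect, including the disjointness checks that justify the totals $k$ and $2k$.
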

The next section presents additional results on nodal voltages and discusses tractable learning in the presence of missing nodes.
\section{Learning with missing nodes}
\label{sec:learningmissing1}
We consider voltage measurements and knowledge of injection statistics at the observed nodes while the missing nodes are unobserved. First we consider the setting where missing nodes are separated by greater than two hops.
\subsection{Missing nodes separated by three or more hops}
\label{sec:hidden_3}
Let the set of observed nodes be $\mathcal O$, i.e., where voltage measurements and injection covariances are known. We consider arbitrary placement of unobserved node set $\mathcal M$ with no measurements or historical data under the following restriction in this section.

\textbf{Assumption $2$:} All missing nodes have a degree greater than $2$ and are separated by greater than two hops in the grid tree $\mathcal T$.

The degree assumption ensures uniqueness of topology reconstruction. In particular, if hidden nodes of degree $2$ are adjacent, one can combine them into a single hidden node by Kron reduction (similar to Theorem \ref{thm:hiddendegree1}) while maintaining consistency with available measurements. This prevents unique reconstruction. Note that under Assumption $2$, no hidden node is a leaf.

Consider a tree $\mathcal T$ where missing node set $\mathcal M$ satisfies Assumption $2$. Let the minimum spanning tree ${\mathcal T}_{\mathcal M}$ between observed nodes $\mathcal{O}$ be constructed using Algorithm \ref{alg:1} with $\phi$'s as edge weights. Consider the case shown in Fig.~\ref{fig:missing1} with missing node $b$. By Assumption $2$, all nodes within two hops of $b$ are observed. Hence its parent $a$, children node set ${\mathcal C}_b = \{c_1, c_2, c_3, c_4\}$ are observed. Also all neighbors of $a$ and ${\mathcal C}_b$ except $b$ are observed. By Theorem \ref{Theoremcases}, all edges between $a$ and non-descendants of $b$ in $\mathcal{T}_{\mathcal M}$ are true edges, while observed descendants of $b$ are connected to the rest of ${\mathcal T}_{\mathcal M}$ through false\footnote{non-existent edges} edges between ${\mathcal C}_b$ and $a$. The following theorem gives possible configurations between ${\mathcal C}_b$ and $a$ in ${\mathcal T}_{\mathcal M}$.
\begin{figure}[!bt]
\centering
\hspace*{\fill}
\subfigure[]{\includegraphics[width=0.15\textwidth]{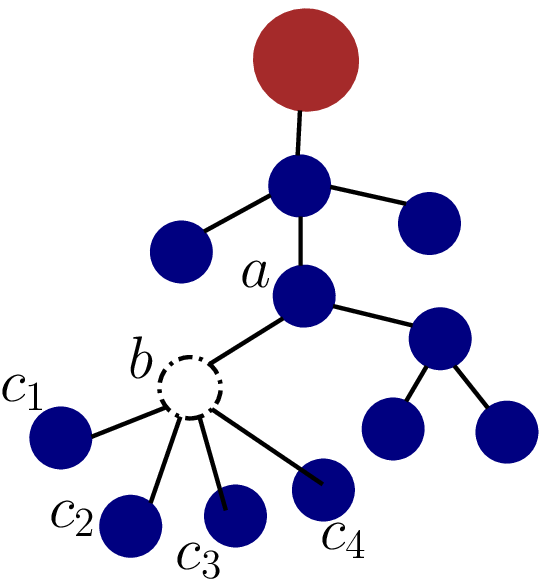}\label{fig:missing1}}\hfill
\subfigure[]{\includegraphics[width=0.15\textwidth]{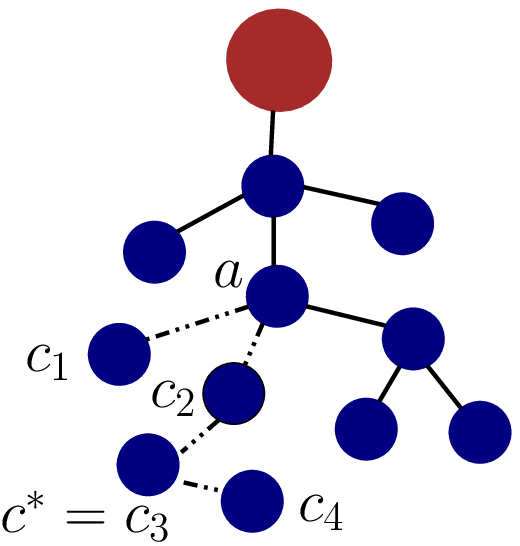}\label{fig:missing2}}\hfill
\hspace*{\fill}
\squeezeup
\caption{(a)Distribution grid tree ${\mathcal T}$ with unobserved node $b$. Node $a$ is $b$'s parent while nodes $c_1,c_2,c_3,c_4$ are its children. (b) Possible configuration of spanning tree ${\mathcal T}_{\mathcal M}$ of observed nodes as per Theorem \ref{permissiblecases}\label{fig:missing}}
\end{figure}
\begin{theorem}\label{permissiblecases}
For missing node $b$ in $\mathcal T$ with observed parent $a$ and observed children node set ${\mathcal C}_b$, let $\arg\min\limits_{c_i \in {\mathcal C}_b} \phi_{bc_i} = c^*$. Then
\begin{itemize}[leftmargin=*]
\item No edge $(c_ic_j)$ between children $c_i, c_j \neq c^*$ exists in ${\mathcal T}_{\mathcal M}$.
\item Nodes in set ${\mathcal C}_b^1= \{c_i\in {\mathcal C}_b, \phi_{ac_i} < \phi_{c^*c_i}\}$ are connected to node $a$, those in $\mathcal{C}_b-{\mathcal C}_b^1-\{c^*\}$ are connected to $c^*$.
\end{itemize}
\end{theorem}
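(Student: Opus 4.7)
The plan is to use the cut and cycle properties of minimum spanning trees in tandem with the additivity and strict sub-additivity of $\phi$ along tree paths proved in Theorem \ref{Theoremcases}. As a setup, the reasoning justifying Theorem \ref{main} already implies that every true grid edge whose endpoints are both observed belongs to $\mathcal T_{\mathcal M}$; in particular, for each $c_i$ the subtree $\mathcal D^{c_i}\cap\mathcal O$ is reconstructed correctly, so the only remaining question is which spurious edges bridge these subtrees and node $a$ across the missing node $b$.

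For the first bullet I would argue by exchange. Suppose for contradiction that $(c_ic_j)\in\mathcal T_{\mathcal M}$ with $c_i,c_j\ne c^*$. Deleting this edge splits $\mathcal T_{\mathcal M}$ into two components, and $c^*$ lies in exactly one of them. If $c^*$ lies in the component containing $c_j$, I replace $(c_ic_j)$ by $(c_ic^*)$; otherwise I replace it by $(c_jc^*)$. In either case the result is a spanning tree of strictly smaller total weight, because $b$ is the LCA of any two of its children, so case $1$ of Theorem \ref{Theoremcases} gives $\phi_{c_ic^*}=\phi_{bc_i}+\phi_{bc^*}<\phi_{bc_i}+\phi_{bc_j}=\phi_{c_ic_j}$ (and symmetrically in the other case), using that $\phi_{bc^*}$ is minimal among siblings. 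This contradicts the minimality of $\mathcal T_{\mathcal M}$.

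For the second bullet I would invoke the cut property, applied for each $c_i\ne c^*$ to the cut separating $\mathcal D^{c_i}\cap\mathcal O$ from the rest of $\mathcal O$. The goal is to show that the unique minimum-weight edge across this cut is $(c_ia)$ when $\phi_{ac_i}<\phi_{c^*c_i}$ and $(c_ic^*)$ when the reverse holds. Two monotonicity observations reduce the problem to this dichotomy. First, for any strict descendant $v'$ of $c_i$ and any observed $u\notin\mathcal D^{c_i}$, the LCA of $v'$ and $u$ lies strictly above $c_i$, so case $2$ of Theorem \ref{Theoremcases} yields $\phi_{v'u}>\phi_{c_iu}$; hence the inside endpoint of the minimum-weight edge must be $c_i$ itself. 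Second, for the outside endpoint I split by the location of $u$: if $u\notin\mathcal D^b$, the path from $c_i$ to $u$ passes through $a$, and either case of Theorem \ref{Theoremcases} gives $\phi_{c_iu}\ge\phi_{c_ia}$; if instead $u\in\mathcal D^{c_l}$ for some $l\ne i$, the analogous argument applied at $c_l$ gives $\phi_{c_iu}\ge\phi_{c_ic_l}=\phi_{bc_i}+\phi_{bc_l}\ge\phi_{bc_i}+\phi_{bc^*}=\phi_{c_ic^*}$. The outside endpoint is thus $a$ or $c^*$, and whichever of $\phi_{ac_i}, \phi_{c^*c_i}$ is smaller determines the bridge, yielding Claim 2.

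The main technical obstacle lies in the second monotonicity step: one must enumerate where the external observed node $u$ can sit (a strict ancestor of $a$, a sibling subtree of $a$, a sibling subtree $\mathcal D^{c_l}$, or the special vertices $a$ and $c^*$ themselves) and invoke the correct case of Theorem \ref{Theoremcases} in each sub-case, making sure that the strict and non-strict inequalities line up so that the minimum-weight edge across the cut is genuinely unique and the cut property applies without ambiguity.
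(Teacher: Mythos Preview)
Your proposal is correct and follows the same route as the paper: both rely on the sibling identity $\phi_{c_ic_j}=\phi_{bc_i}+\phi_{bc_j}$ from case~1 of Theorem~\ref{Theoremcases} to show that $c^*$ dominates every other child as a bridge, and then invoke MST optimality for the $a$-versus-$c^*$ dichotomy. Your exchange and cut-property arguments are simply a fleshed-out version of the paper's three-line sketch, which asserts the second bullet ``follows by definition of min-weight spanning tree'' without further detail.
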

\begin{proof}
Consider node pair $c_i,c_j\neq c^*$ in $\mathcal{C}_b$. Using Eq.~(\ref{second}) $\phi_{c_ic_j} = \phi_{bc_i}+ \phi_{bc_j} <\phi_{bc_i}+ \phi_{bc^*} = \phi_{c_ic^*}$. Thus, any possible edge between nodes in $\mathcal{C}_b$ includes node $c^*$. The edges for each node in sets ${\mathcal C}_b^1$ and $\mathcal{C}_b -{\mathcal C}_b^1$ follows by definition of min-weight spanning tree.
\end{proof}
Note that one of the sets ${\mathcal C}^1$ or ${\mathcal C}^2$ may be empty. It is worth mentioning that node $c^*$ can be connected to some node $c\dag \in {\mathcal C}^1$ instead of directed to $a$ if $\phi_{ac\dag} < \phi_{c^*c\dag} <\phi_{ac^*}$ holds. Theorem \ref{permissiblecases} thus suggests that if Algorithm \ref{alg:1} outputs $\mathcal{T}_{\mathcal{M}}$ between observed nodes, it may include false edges between an observed node to either its siblings (for missing parent), or to its grandchildren (for a single missing child). This is depicted in Fig.~\ref{fig:missing2}. In particular, two sibling nodes with missing parent in $\mathcal T$ may be as far as four hops away in ${\mathcal T}_{\mathcal M}$. Note that unlike the case for missing nodes of degree $2$ (see Theorem \ref{thm:hiddendegree1}), here multiple configurations may be possible.

To estimate the operational edges, locate the missing nodes and estimate their injections statistics, we require additional properties of $\phi$ that make learning tractable. First we prove equality relations for $\phi$ computed for parent-child nodes and parent-grandchildren nodes.

\begin{theorem} \label{Theoremcases2}
In $\mathcal T$, the following statements hold:\\\\
$1$. If node $b$ is the parent of nodes $a$ and $c$ (see Fig.~\ref{fig:item1})
\begin{subequations}
\footnotesize
\begin{align}
\phi_{ab} &= \smashoperator[lr]{\sum_{d \in \mathcal{D}^a}}r_{ab}^2\Omega_p(d,d)+x_{ab}^2 \Omega_q(d,d)+2r_{ab}x_{ab}\Omega_{pq}(d,d)\label{first}\\
\phi_{ac} &= \smashoperator[lr]{\sum_{d \in \mathcal{D}^a}}r_{ab}^2\Omega_p(d,d)\nonumber+x_{ab}^2 \Omega_q(d,d)+2r_{ab}x_{ab}\Omega_{pq}(d,d)\nonumber\\
&+ \smashoperator[lr]{\sum_{d \in \mathcal{D}^c}}r_{bc}^2\Omega_p(d,d)+x_{bc}^2 \Omega_q(d,d)+2r_{bc}x_{bc}\Omega_{pq}(d,d)\label{second}
\end{align}
\end{subequations}
$2$. If node $g$ is the parent of node $b$ and grandparent of nodes $a$ and $c$ (see Fig.~\ref{fig:item1}),
\begin{subequations}
\footnotesize
\begin{flalign}
&\phi_{ag}-\phi_{cg} = \smashoperator[lr]{\sum_{d \in \mathcal{D}^a}}\Omega_p(d,d)(r_{ab}^2+ 2r_{ab}r_{bg}) + \Omega_q(d,d)(x_{ab}^2+ 2x_{ab}x_{bg})\nonumber\\ &+2\Omega_{pq}(d,d)(r_{ab}x_{ab}+r_{bg}x_{ab}+r_{ab}x_{bg})-
\smashoperator[lr]{\sum_{d \in \mathcal{D}^c}}\Omega_p(d,d)(r_{cb}^2+ 2r_{cb}r_{bg}) \nonumber\\
&-\Omega_q(d,d)(x_{cb}^2+ 2x_{cb}x_{bg}) -2\Omega_{pq}(d,d)(r_{cb}x_{cb}+r_{bg}x_{cb}+r_{cb}x_{bg})\text{~and,}\label{third}\end{flalign}
\begin{flalign}
&\phi_{ag} = \smashoperator[lr]{\sum_{d \in \mathcal{D}^a}}\Omega_p(d,d)(r_{ab}+ r_{bg})^2 +2\Omega_{pq}(d,d)(r_{ab}+r_{bg})(x_{ab}+x_{bg})\label{fourth}\\
+ &\Omega_q(d,d)(x_{ab}+ x_{bg})^2+ \smashoperator[lr]{\sum_{d \in \mathcal{D}^b-\mathcal{D}^a}}\Omega_p(d,d)r_{bg}^2+\Omega_q(d,d)x_{bg}^2+2\Omega_{pq}(d,d)r_{bg}x_{bg}\nonumber\\
&\phi^{\theta}_{ag} = \smashoperator[lr]{\sum_{d \in \mathcal{D}^a}}\Omega_p(d,d)(x_{ab}+ x_{bg})^2 -2\Omega_{pq}(d,d)(r_{ab}+r_{bg})(x_{ab}+x_{bg})\label{fifth}\\
+& \Omega_q(d,d)(r_{ab}+ r_{bg})^2+ \smashoperator[lr]{\sum_{d \in \mathcal{D}^b-\mathcal{D}^a}}\Omega_p(d,d)x_{bg}^2+\Omega_q(d,d)r_{bg}^2-2\Omega_{pq}(d,d)r_{bg}x_{bg}\nonumber\\
&\phi^{v\theta}_{ag}= \smashoperator[lr]{\sum_{d \in \mathcal{D}^a}}(\Omega_p(d,d)-\Omega_q(d,d))(r_{ab}+r_{bg})(x_{ab}+x_{bg})+\Omega_{pq}(d,d)(x_{ab}+x_{bg})^2\label{sixth}\\
-&\Omega_{pq}(d,d)(r_{ab}+r_{bg})^2+\smashoperator[lr]{\sum_{d \in \mathcal{D}^b-\mathcal{D}^a}}(\Omega_p(d,d)-\Omega_q(d,d))r_{bg}x_{bg}+\Omega_{pq}(d,d)(x^2_{bg}-r^2_{bg})\nonumber
\end{flalign}
\end{subequations}
where $\phi^{v\theta}_{ab} =\mathbb{E}[(v_a -\mu_{v_a}-v_b+\mu_{v_b})(\theta_a - \mu_{\theta_a}-\theta_b+\mu_{\theta_b})]$ and $\phi^{\theta}_{ab} =\mathbb{E}[(\theta_a - \mu_{\theta_a})-(\theta_b-\mu_{\theta_b})]^2$.
\end{theorem}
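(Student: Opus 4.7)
The plan is to derive all six identities by direct substitution into the voltage-difference expansion (\ref{usediff_1}) (and its phase and cross-covariance analogues obtained the same way from (\ref{phasecovar1})--(\ref{volphasecovar1})), leveraging only the structural identity (\ref{Hdiff}) and Case 1 of Theorem \ref{Theoremcases}. No new machinery is required; the content is entirely bookkeeping over which injection sites $d$ contribute.

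For Eq.~(\ref{first}), $b$ is the parent of $a$, so (\ref{Hdiff}) gives $H_{1/r}^{-1}(a,d)-H_{1/r}^{-1}(b,d)=r_{ab}$ if $d\in\mathcal{D}^a$ and $0$ otherwise, with the analogous statement for $H_{1/x}^{-1}$ and $x_{ab}$. Substituting into (\ref{usediff_1}) collapses the sum to $\mathcal{D}^a$ and yields the claim. For Eq.~(\ref{second}), since $b$ is the least common ancestor of $a$ and $c$, Theorem \ref{Theoremcases} Case 1 gives $\phi_{ac}=\phi_{ab}+\phi_{bc}$, and applying (\ref{first}) to each term produces the two-region sum.

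For the grandparent identity (\ref{fourth}) I would apply (\ref{Hdiff}) twice along the chain $a\to b\to g$. This partitions the injection sites into three regions: for $d\in\mathcal{D}^a$ the difference $H_{1/r}^{-1}(a,d)-H_{1/r}^{-1}(g,d)$ equals $r_{ab}+r_{bg}$; for $d\in\mathcal{D}^b\setminus\mathcal{D}^a$ (which includes $b$ itself and the descendants of every other child of $b$) it equals $r_{bg}$; for $d\notin\mathcal{D}^b$ it vanishes. The analogous statement holds for $1/x$. Squaring these differences and dropping them into (\ref{usediff_1}) produces (\ref{fourth}) verbatim. Identity (\ref{third}) then follows by computing $\phi_{ag}$ and $\phi_{cg}$ via (\ref{fourth}) and subtracting: the ``background'' contributions over $\mathcal{D}^b\setminus(\mathcal{D}^a\cup\mathcal{D}^c)$ cancel exactly because both terms carry identical weights $r_{bg}^2,\,x_{bg}^2,\,2r_{bg}x_{bg}$ on that set; for $d\in\mathcal{D}^a$ the $(r_{ab}+r_{bg})^2$ weight from $\phi_{ag}$ loses the $r_{bg}^2$ weight that $\phi_{cg}$ places there, leaving $r_{ab}^2+2r_{ab}r_{bg}$, and symmetrically for $\mathcal{D}^c$. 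The cross $\Omega_{pq}$ coefficient works the same way: $(r_{ab}+r_{bg})(x_{ab}+x_{bg})-r_{bg}x_{bg}=r_{ab}x_{ab}+r_{ab}x_{bg}+r_{bg}x_{ab}$.

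Identities (\ref{fifth}) and (\ref{sixth}) are the same calculation applied to $\phi^{\theta}_{ag}$ and $\phi^{v\theta}_{ag}$, expanded via (\ref{phasecovar1}) and (\ref{volphasecovar1}) in place of (\ref{volcovar1}). For $\phi^{\theta}$ the roles of $H_{1/r}^{-1}$ and $H_{1/x}^{-1}$ swap, which interchanges the $r$- and $x$-factors in every coefficient and flips the sign of the $\Omega_{pq}$ cross term, giving (\ref{fifth}). For $\phi^{v\theta}$ the mixed inverses produce the asymmetric combination $(\Omega_p-\Omega_q)(r_{ab}+r_{bg})(x_{ab}+x_{bg})+\Omega_{pq}[(x_{ab}+x_{bg})^2-(r_{ab}+r_{bg})^2]$ on $\mathcal{D}^a$ and its $b$-level analogue on $\mathcal{D}^b\setminus\mathcal{D}^a$, matching (\ref{sixth}). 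The main obstacle is purely bookkeeping: keeping the three regions $\mathcal{D}^a$, $\mathcal{D}^b\setminus\mathcal{D}^a$, and the complement of $\mathcal{D}^b$ correctly separated when squaring the difference of two $H^{-1}$ entries, and tracking the signs in (\ref{phasecovar1})--(\ref{volphasecovar1}) when passing from $\phi$ to $\phi^{\theta}$ and $\phi^{v\theta}$; everything else is immediate substitution.
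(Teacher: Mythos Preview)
Your proposal is correct and follows essentially the same route as the paper: the paper's proof is a one-paragraph sketch stating that Statement~1 ``follows the derivation of the first statement in Theorem~\ref{Theoremcases} for parent-child pairs'' and that Statement~2 is obtained ``by expanding $\phi$, $\phi^{\theta}$ and $\phi^{v\theta}$ for grandchildren-grandparent pairs using Eq.~(\ref{usediff_1}) and Eq.~(\ref{Hdiff}),'' which is exactly the substitution-and-partition argument you spell out. Your derivation of (\ref{third}) via subtraction of two instances of (\ref{fourth}) is a clean way to organize the cancellation; the paper does not detail this step, but it is the natural execution of its sketch.
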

The derivation of statement $1$ in Theorem \ref{Theoremcases2} follows the derivation of the first statement in Theorem \ref{Theoremcases} for parent-child pairs. The second statement is proven by expanding $\phi$, $\phi^{\theta}$ and $\phi^{v\theta}$ for grandchildren-grandparent pairs using Eq.~(\ref{usediff_1}) and Eq.~(\ref{Hdiff}). We mention key takeaways from Theorem \ref{Theoremcases2} that enable verification of relative nodal positions in tree $\mathcal T$ and estimate injection statistics.
\begin{enumerate}[leftmargin =*]
 \item If all descendants of nodes $a$ are known then Eq.~(\ref{first}) can be used to verify its parent.
 \item If $a$ and $c$ are known siblings and their descendants are known, then Eq.~(\ref{second}) can be used to search for their parent $b$ among possible edges in $\mathcal{E}_{full}$.
 \item If $a$ and $c$ are siblings with known grandparent $g$ and descendant sets $\mathcal{D}^a,\mathcal{D}^c$, Eq.~(\ref{third}) can be used to search for $a$ and $c$'s parent.
 \item If the injections at all descendants of node $b$ is known and its parent is verified as $g$, Eqs.~(\ref{fourth}-\ref{sixth}) can be used to determine its injection statistics.
\end{enumerate}
Note that identification of parents as listed above (takeaways $2,3$) involves a linear search over the set of permissible edges and hence is not computationally intensive. In the final takeaway, the estimation of $b$'s injection statistics ($\Omega_p(b,b),\Omega_q(b,b),\Omega_{pq}(b,b)$) involves solving three linear equations with three unknowns if all its descendants are known.

These results are used next to jointly estimate topology and injection statistics in the presence of missing nodes. The overall steps in the learning procedure are listed in Algorithm \ref{alg:2}.
\begin{algorithm}
\caption{Learning with Hidden Nodes separated by more than $2$ hops}\label{alg:2}
\textbf{Input:} Voltage observations $v$, $\theta$, and injection covariances $\Omega_p, \Omega_q, \Omega_{pq}$ at available node set $\mathcal O$, hidden node set ${\mathcal M}$, set of permissible edges ${\mathcal E}_{full}$ with line impedances, thresholds $\tau_1,\tau_2$.\\
\textbf{Output:} Operational edges ${\mathcal E}$, $\Omega_p, \Omega_q, \Omega_{pq}$ at set $\mathcal{M}$
\begin{algorithmic}[1]
\State $\forall$ nodes $a,b \in {\mathcal O}$, compute $\phi_{ab}$ and find minimum weight spanning tree ${\mathcal T}_{\mathcal{M}}$ with $\phi_{ab}$ as edge weights. \label{span2}
\State Sort nodes in ${\mathcal T}_{\mathcal{M}}$ in decreasing order of their depths and mark them as unexplored.
\While {$|{\mathcal M}| >0$ OR no node unexplored}
\State Select unexplored node $a$ with parent $p$ at greatest depth with observed children set $\mathcal{C}_a$ and undetermined grandchildren set $G_{a}$ in ${\mathcal T}_{\mathcal{M}}$.
\ForAll{$b \in \mathcal{C}_a$}\label{step_parent21}
\If {$\phi_{ab}$ satisfy Eq.~(\ref{first}) with threshold $\tau_1$}
\State ${\mathcal E} \gets {\mathcal E} \cup \{(ab)\}$, $\mathcal{C}_a\gets \mathcal{C}_a-\{b\}$
\EndIf
\EndFor\label{step_parent22}
\State $\mathcal{C}_a \gets \mathcal{C}_a \bigcup G_a$
\For{$b \in \mathcal{M}$,$|\mathcal{C}_a|\geq 2$} \label{step_grandparent21}
\If{$a$, child $b$, grandchildren in $\mathcal{C}_a$ satisfy Eq.~(\ref{third}) with threshold $\tau_2$}
\State ${\mathcal E}\gets {\mathcal E} \bigcup \{(ba)\}\bigcup\{(bc)\forall c\in \mathcal{C}_a\}$
\State Solve $\Omega_p(b,b), \Omega_q(b,b), \Omega_{pq}(b,b)$ from (\ref{fourth}-\ref{sixth}).
\State $\mathcal{C}_a\gets \{\}$, $\mathcal M\gets \mathcal M-\{b\}$.
\EndIf
\EndFor\label{step_grandparent22}
\If{$|\mathcal{C}_a|\geq 0$}\label{step_sibling21}
\State Disconnect $(ap)$ from $a$'s parent $p$ in $\mathcal{T}_{\mathcal{M}}$. Expand undetermined grandchildren set of $p$, $G_{p} \gets G_{p}\bigcup\mathcal{C}_a\bigcup\{a\}$
\EndIf\label{step_sibling22}
\State Mark $a$ as explored
\EndWhile
\end{algorithmic}
\end{algorithm}

\textbf{Algorithm \ref{alg:2} working:} We first construct the spanning tree $\mathcal{T}_{\mathcal{M}}$ of observed nodes using $\phi$ as edge weights of permissible edges in set $\mathcal{E}_{full}$ (Step \ref{span2}). To determine missing nodes and their injection statistics, we iteratively verify edges starting from leaves to the root in $\mathcal{T}_{\mathcal M}$. This is done as checks at a node depend on injections at its descendants that may be missing. We consider observed non-leaf nodes at the greatest depth in ${\mathcal T}_{\mathcal{M}}$ to iteratively search for hidden nodes with the first iteration involving parents of leaf nodes. We first use Eq.~(\ref{first}) to verify whether each edge is true (Steps \ref{step_parent21}-\ref{step_parent22}). If edges to some set $\mathcal{C}_a$ are not verified, we check if $a$ is their grandparent with some missing parent $b$ using Eq.~(\ref{third}) (Steps \ref{step_grandparent21}-\ref{step_grandparent22}). From Assumption $2$, nodes in $\mathcal{C}_a$ can have one missing parent. If missing parent is identified, its injections are estimated using Eqs.~(\ref{fourth}). If not confirmed, we list $a$ and $\mathcal{C}_a$ as siblings with unknown parent under $a$'s previous parent $p$ (Steps \ref{step_sibling21}-\ref{step_sibling22}). $a$ is marked as explored and the algorithm looks at the next unexplored node.

\textbf{Computational Complexity:} As before, we can compute the spanning tree for observed nodes in $O((|{\mathcal V}|-|{\mathcal M}|)^2\log(|{\mathcal V}|-|{\mathcal M}|))$ in worst case when all edges between observed nodes are permissible. Next we sort the observed nodes in topological order in linear time $O(|{\mathcal V}|-|{\mathcal M}|)$ \cite{Cormen2001}. Checking the parent-child and grandparent-grandchildren relations has complexity $O((|{\mathcal V}|- |{\mathcal M}|)|{\mathcal V}|)$ due to iterations over $O(|{\mathcal V}|- |{\mathcal M}|)$ observed nodes in $\mathcal{T}_{\mathcal{M}}$ with possible search over each child and each missing node. The overall complexity is thus $O(|{\mathcal V}|^2\log |{\mathcal V}|)$ in the worst case.

In the next section, we extend Algorithm \ref{alg:2} to consider cases where missing nodes can be two hops away instead of three.

\subsection{Missing nodes separated by two or more hops}
\label{sec:hidden_2}
Here we consider missing nodes' placement under the following assumption.

\textbf{Assumption $3$:} All missing nodes have a degree greater than $2$ and are not adjacent in the grid tree $\mathcal T$.

Under Assumption $3$, both parent and multiple children of an observed node $a$ may be missing (see Fig. \ref{fig:twohop}). This is unlike Assumption $2$ where only parent or one child of $a$ may be missing. Let $\mathcal{T}_{\mathcal{M}}$ be the spanning tree of observed nodes given by Algorithm \ref{alg:1}. In $\mathcal{T}_{\mathcal{M}}$ under Assumption $3$, $a$ may thus be connected as parent to its siblings (from missing parent), as well as to its grandchildren (from multiple missing children) as depicted in Fig.~\ref{fig:missing_1}. Thus, observed nodes that are four hops away in $\mathcal T$ may be two hops away in $\mathcal{T}_{\mathcal M}$.

To distinguish true siblings and true grandchildren in $\mathcal T$ among false children in the spanning tree of observed nodes, we use additional voltage inequalities at node triplets (groups of three), described next.\squeezeup\squeezeup\squeezeup
\begin{figure}[hbt]
\centering
\hspace*{\fill}
\subfigure[]{\includegraphics[width=0.15\textwidth]{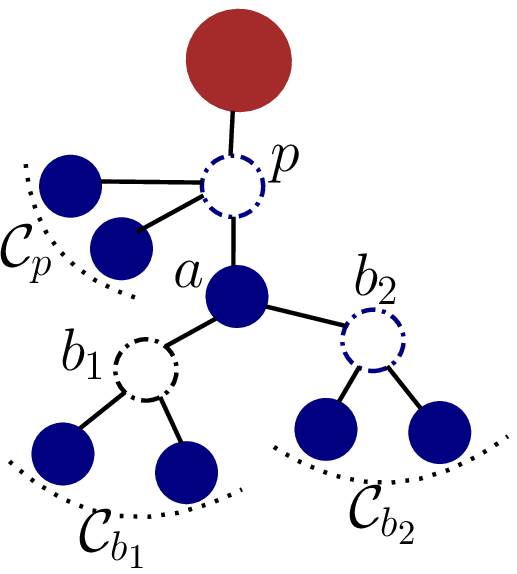}\label{fig:twohop}}\hfill
\subfigure[]{\includegraphics[width=0.18\textwidth]{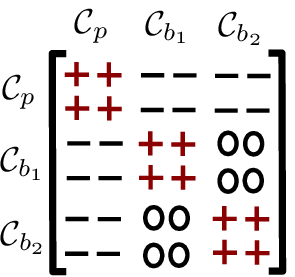}\label{fig:matrix}}
\squeezeup
\hspace*{\fill}
\caption{(a) Node $a$ with parent $p$ and children $b_1,b_2$. Node $a$ has siblings $C_p$, grandchildren $C_{b_1},C_{b_2}$. (b) $[\phi_{k_1a} -\phi_{k_2a}+\phi_{k_1k_2}]$ for $k_1,k_2 \in C_{b_1},C_{b_2},C_p$}
\end{figure}

\begin{theorem} \label{Theoreminequality}
Consider node $a$ in $\mathcal T$ with parent $p$ and children nodes $b_1,b_2$. Let $\mathcal{C}_p$ be set of sibling nodes of $a$ with parent $p$ (see Fig.~\ref{fig:twohop}). Let $\mathcal{C}_{b_1} $ be children nodes of $b_1$ and $\mathcal{C}_{b_2}$ be children of $b_2$. Then the following inequalities hold:
\begin{enumerate}[leftmargin = *]
\item $\phi_{k_1a} +\phi_{k_2a}- \phi_{k_1k_2} > 0 \text{~if~} k_1,k_2 \text{~are siblings in~} \mathcal{C}_{b_1}, \mathcal{C}_{b_2},\text{~or~}\mathcal{C}_p$
\noindent\item $\phi_{k_1a} +\phi_{k_2a}- \phi_{k_1k_2} = 0 \text{~if~} k_1 \in \mathcal{C}_{b_1},k_2 \in \mathcal{C}_{b_2}$
\noindent\item $\phi_{k_1a} +\phi_{k_2a}- \phi_{k_1k_2} < 0 \text{~if~} k_1 \in \mathcal{C}_{b_1} \text{or~} \mathcal{C}_{b_2}$ and $k_2 \in \mathcal{C}_{p}$
\end{enumerate}
\end{theorem}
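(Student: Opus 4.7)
My plan is to reduce all three cases of Theorem~\ref{Theoreminequality} to path-additivity statements from Theorem~\ref{Theoremcases}. The key observation is that for any two tree nodes $u,v$, the LCA (lowest common ancestor) decomposition $\phi_{uv} = \phi_{u,\mathrm{LCA}(u,v)} + \phi_{\mathrm{LCA}(u,v),v}$ follows from the equality case (case~1 of Theorem~\ref{Theoremcases}), since $\mathcal{P}^u \cap \mathcal{P}^v = \mathcal{P}^{\mathrm{LCA}(u,v)}$. Whenever I insert an intermediate node that lies strictly below the LCA on the path, case~2 of Theorem~\ref{Theoremcases} gives a strict inequality. The entire proof then reduces to correctly identifying the LCA of each relevant pair among $\{k_1,k_2,a\}$ and applying these two rules.

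For statement~2 ($k_1 \in \mathcal{C}_{b_1}$, $k_2 \in \mathcal{C}_{b_2}$), the LCA of $k_1$ and $k_2$ is exactly $a$, because $k_1, k_2$ lie in disjoint subtrees rooted at the two children $b_1, b_2$ of $a$. Invoking case~1 of Theorem~\ref{Theoremcases} with $a$ as intermediate yields $\phi_{k_1 k_2} = \phi_{k_1 a} + \phi_{a k_2}$, and rearranging gives the claimed equality.

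For statement~1, I split into the three sibling subcases. When $k_1, k_2 \in \mathcal{C}_{b_i}$ for some $i \in \{1,2\}$, their LCA is $b_i$, so $\phi_{k_1 k_2} = \phi_{k_1 b_i} + \phi_{b_i k_2}$, while the LCA of each $k_j$ with $a$ is still $a$ with $b_i$ on the connecting path, giving $\phi_{k_j a} = \phi_{k_j b_i} + \phi_{b_i a}$. Substituting produces $\phi_{k_1 a} + \phi_{k_2 a} - \phi_{k_1 k_2} = 2\phi_{b_i a} > 0$, where positivity uses $b_i \ne a$ (since $b_i$ is a child of $a$) and the fact that $\phi$ is a variance. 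The subcase $k_1, k_2 \in \mathcal{C}_p$ is analogous: every relevant pair has LCA $p$, and the same bookkeeping yields $2\phi_{pa} > 0$.

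For statement~3 ($k_1 \in \mathcal{C}_{b_1}$ or $\mathcal{C}_{b_2}$, $k_2 \in \mathcal{C}_p$), take $k_1 \in \mathcal{C}_{b_1}$ without loss of generality. The LCA of $k_1$ and $k_2$ is $p$, so case~1 gives $\phi_{k_1 k_2} = \phi_{k_1 p} + \phi_{p k_2}$, and the LCA of $k_2$ with $a$ is also $p$, so $\phi_{k_2 a} = \phi_{k_2 p} + \phi_{p a}$. Substituting, the $\phi_{k_2 p}$ terms cancel and the expression collapses to $\phi_{k_1 a} + \phi_{k_2 a} - \phi_{k_1 k_2} = \phi_{k_1 a} + \phi_{a p} - \phi_{k_1 p}$. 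Now the heart of the case: apply Theorem~\ref{Theoremcases} to the triple $(k_1, a, p)$. The LCA of $k_1$ and $p$ is $p$ itself, yet $a$ lies on the $k_1$--$p$ path strictly below this LCA; hence $\mathcal{P}^{k_1} \cap \mathcal{P}^{p} = \mathcal{P}^{p} \subsetneq \mathcal{P}^{a}$, which is exactly the hypothesis of case~2, yielding the strict inequality $\phi_{k_1 a} + \phi_{a p} < \phi_{k_1 p}$, i.e.\ the desired negative sign.

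The main obstacle I anticipate is purely bookkeeping: in each case one has to pick the right intermediate node so that Theorem~\ref{Theoremcases} applies cleanly, and in statement~3 one must recognize that the critical $<$ arises precisely from using $a$ as a \emph{non}-LCA intermediate. Everything else is routine substitution; no new covariance expansions (e.g.\ Eq.~\eqref{usediff_1}) are required beyond what already underlies Theorem~\ref{Theoremcases}.
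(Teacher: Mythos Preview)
Your overall strategy---reducing everything to Theorem~\ref{Theoremcases} via LCA bookkeeping---is exactly the paper's approach, and your arguments for statements~2 and~3 and for the $\mathcal{C}_p$ subcase of statement~1 are correct and match the paper.

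There is one slip in the $\mathcal{C}_{b_i}$ subcase of statement~1. You correctly note that the LCA of $k_j$ and $a$ is $a$ itself, but then you decompose $\phi_{k_j a}$ at $b_i$ and claim the \emph{equality} $\phi_{k_j a} = \phi_{k_j b_i} + \phi_{b_i a}$. That contradicts your own LCA principle: since $b_i$ lies strictly below the LCA $a$ on the path (i.e.\ $\mathcal{P}^{k_j}\cap\mathcal{P}^{a}=\mathcal{P}^{a}\subsetneq\mathcal{P}^{b_i}$), case~2 of Theorem~\ref{Theoremcases} applies and gives the strict inequality $\phi_{k_j b_i} + \phi_{b_i a} < \phi_{k_j a}$, not equality. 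The paper handles this subcase precisely via that inequality (writing $\phi_{k_j a} > \phi_{k_j b_i}$ and combining with $\phi_{k_1 k_2} = \phi_{k_1 b_i} + \phi_{b_i k_2}$). Your conclusion survives---in fact your displayed quantity $2\phi_{b_i a}$ becomes a strict lower bound rather than the exact value---but the equality as written is false and should be replaced by $>$.
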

\begin{proof}
To simplify notation, we consider sets $\mathcal{C}_p = \{c_1,c_2\}, \mathcal{C}_{b_1} = \{c_3,c_4\}, \mathcal{C}_{b_2} = \{c_5,c_6\}$ as shown in Fig.~\ref{fig:twohop}. Using the first result in Theorem \ref{Theoremcases}, we have
\begin{align*}
 &\phi_{c_1a} =\phi_{ap} + \phi_{c_1p},~ \phi_{c_1c_2} = \phi_{c_1p} + \phi_{c_2p},~ \phi_{c_2a} = \phi_{ap} + \phi_{c_2p}\nonumber\\
 \Rightarrow~&\phi_{c_1a} +\phi_{c_2a}- \phi_{c_1c_2} > 0.
\end{align*}
Now consider grandchildren of node $a$ and children of $b_1$. From second result in Theorem \ref{Theoremcases}, we have
\begin{align*}
&\phi_{c_3a} > \phi_{c_3b_1},~ \phi_{c_4a} >\phi_{c_4b_1}\nonumber\\
\Rightarrow~&\phi_{c_3a} +\phi_{c_4a}- \phi_{c_3c_4}>0 ~(\text{as}~ \phi_{c_3c_4} = \phi_{c_3b_1} + \phi_{c_4b_1})\nonumber
\end{align*}
By symmetry it is true for $c_5,c_6\in \mathcal{C}_{b_2}$. This proves the first statement. Statement $2$ follows immediately from the first result in Theorem \ref{Theoremcases}. For Statement $3$, consider the case $k_1 = c_3 \in \mathcal{C}_{b_1}, k_2 = c_1\in \mathcal{C}_p$. We have
\begin{align*}
\phi_{c_3c_1} =\phi_{c_3p}+\phi_{c_1p}> \phi_{c_3a}+\phi_{ap}+\phi_{c_1p} = \phi_{c_3a}+\phi_{c_1p}
\end{align*}
where the inequality follows from the second result in Theorem \ref{Theoremcases}.
\end{proof}
The key result of Theorem \ref{Theoreminequality} is effective depicted in Fig.~\ref{fig:matrix} through the matrix $[\phi_{k_1a} +\phi_{k_2a}- \phi_{k_1k_2}]$ constructed using siblings or grandchildren of node $a$. Note that the positive values in the matrix correspond to siblings of common parent. Hence it can be used to distinguish erroneous children of a node into its siblings and grandchildren in our learning algorithm. The true parent of each grandchildren group can be identified and its injection statistics estimated using Eq.~(\ref{third}) and Eq.~(\ref{fourth}-\ref{sixth}) in Theorem \ref{Theoremcases2}. Next, we design Algorithm \ref{alg:3} to learn the topology and injection statistics with non-adjacent missing nodes.

\begin{algorithm}
\caption{Learning with Hidden Nodes separated by more than $1$ hop}\label{alg:3}
\textbf{Input:} Voltage observations $v$, $\theta$, and injection covariances $\Omega_p, \Omega_q, \Omega_{pq}$ at available node set $\mathcal O$, hidden node set ${\mathcal M}$, set of permissible edges ${\mathcal E}_{full}$ with line impedances, thresholds $\tau_1, \tau_2,\tau_3$\\
\textbf{Output:} Operational edges ${\mathcal E}$, $\Omega_p, \Omega_q, \Omega_{pq}$ at set $\mathcal{M}$
\begin{algorithmic}[1]
\State $\forall$ nodes $a,b \in {\mathcal O}$, compute $\phi_{ab}$ and find minimum weight spanning tree ${\mathcal T}_{\mathcal{M}}$ with $\phi_{ab}$ as edge weights. \label{span3}
\State Sort nodes in ${\mathcal T}_{\mathcal{M}}$ in decreasing order of their depths and mark them as unexplored.
\While {$|{\mathcal M}| >0$ OR no node unexplored}
\State Select in ${\mathcal T}_{\mathcal{M}}$ unexplored node $a$ with parent $p$ at greatest depth with observed children set $\mathcal{C}_a$ and undetermined grandchildren sets $G^i_{a}, i = 1,2..$.
\ForAll{$b \in \mathcal{C}_a$}\label{step_parent31}
\If {$\phi_{ab}$ satisfy Eq.~(\ref{first}) with threshold $\tau_1$}
\State ${\mathcal E} \gets {\mathcal E} \cup \{(ab)\}$, $\mathcal{C}_a\gets \mathcal{C}_a-\{b\}$
\EndIf
\EndFor\label{step_parent_32}
\State Take one grandchild $g_i$ per $G^i_a$ and nodes in $\mathcal{C}_a$ and separate them into grandchildren sets $G^i_a$ and sibling set $\mathcal{S}_a$ by clustering $\phi$ using Theorem \ref{Theoreminequality} with threshold $\tau_3$. Add siblings of each $g_i$ to its separated set. \label{step_grandparent31}
\State Find missing parent of separated grandchildren set $G_i$ using Eq.~(\ref{third}) with threshold $\tau_2$, determine its injection statistics using Eqs.~(\ref{fourth}-\ref{sixth}) and remove it from $\mathcal M$. Add discovered edges to $\mathcal{E}$.\label{step_grandparent32}
\If{$|\mathcal{S}_a|\geq 0$}\label{step_sibling31}
\State Disconnect $(ap)$ from $a$'s parent $p$ in $\mathcal{T}_{\mathcal{M}}$. Form undetermined grandchildren group $G^i_{p}$ with $\mathcal{S}_a$ and $a$
\EndIf\label{step_sibling32}
\State Mark $a$ as explored
\EndWhile
\end{algorithmic}
\end{algorithm}
\textbf{Algorithm \ref{alg:3} working:} The basic working of Algorithm \ref{alg:3} follows a similar logic as Algorithm \ref{alg:2}. The differences exist in Steps (\ref{step_grandparent31}-\ref{step_grandparent32}) where Theorem \ref{Theoreminequality} is used to separate siblings of a current node $a$ from its grandchildren and then to identify its missing children and estimate their injection statistics. For better elucidation, the steps in Algorithm \ref{alg:3} for estimating the grid in Fig.~\ref{fig:twohop} are depicted in Fig.~\ref{fig:missing_example}. Note that the hidden nodes are $p,b_1,b_2$.
\begin{figure}[!ht]
\centering
\hspace*{\fill}
\subfigure[]{\includegraphics[width=0.11\textwidth]{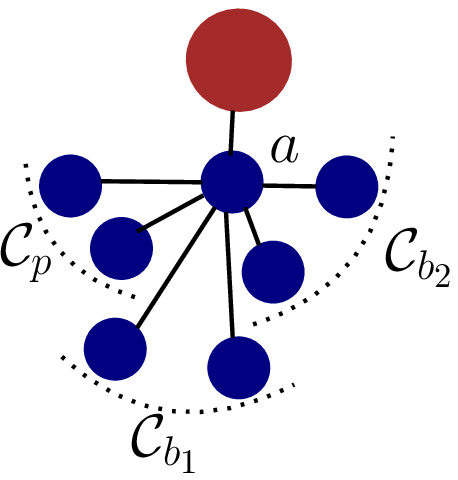}\label{fig:missing_1}}
\subfigure[]{\includegraphics[width=0.11\textwidth]{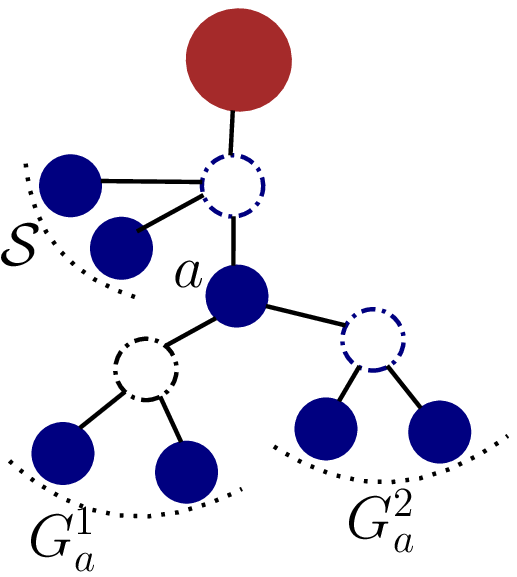}\label{fig:missing_1a}}
\subfigure[]{\includegraphics[width=0.11\textwidth]{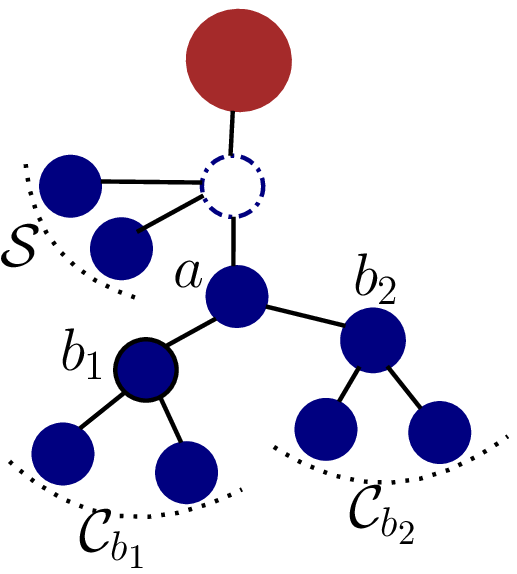}\label{fig:missing_2}}
\subfigure[]{\includegraphics[width=0.11\textwidth]{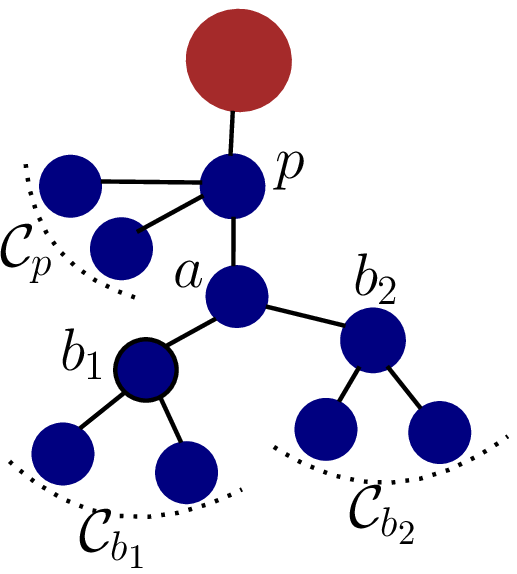}\label{fig:missing_3}}
\hspace*{\fill}
\squeezeup
\caption{Steps in Learning distribution grid in Fig.~\ref{fig:twohop} with hidden nodes $p,b_1,b_2$ (a) Spanning tree ${\mathcal T}_{\mathcal M}$ for observed nodes (b) Separation of children of node $a$ in $\hat{\mathcal T}$ into grandchildren and sibling sets with unknown parent nodes (c) Identifying parent node of $a$'s grandchildren, $a$'s parent unidentified (d) Identifying missing parent $p$ of node $a$.}
\label{fig:missing_example}\end{figure}

\textbf{Computational Complexity:} The complexity of Algorithm \ref{alg:3} can be computed similar to that of Algorithm \ref{alg:2} as the logic is similar. The primary difference in complexity arises due to separation between siblings and grandchildren of a node in Step (\ref{step_grandparent21}-\ref{step_grandparent22}) and identifying its missing children. This has complexity $O(|{\mathcal V}|^2)$. Iterating over all nodes, the complexity becomes $O(|{\mathcal V}|^3)$ in the worst case. This also dominates the overall complexity which is $O(|{\mathcal V}|^3)$. It is worth mentioning that the computational complexity results in the paper do not assume knowledge of tree depth, maximum node degree, or cardinality of permissible edge set. If they are known, the complexity can be reduced further.

\textbf{Extension to Multiple Trees:} Algorithms \ref{alg:1}, \ref{alg:2} and \ref{alg:3} can be extended to grids with multiple trees powered by different sub-stations. There we separate node groups for each tree before running the learning algorithms. This is possible as voltage magnitudes are measured relative to the root node, and hence voltages at two nodes $a$ and $b$ in distinct trees will be uncorrelated.

{\textbf{Correlated Injections:} Algorithms \ref{alg:1}, \ref{alg:2} and \ref{alg:3} and the theorems guaranteeing their correctness rely on the injection fluctuations being uncorrelated. Note that under correlated injections, $\Sigma_p, \Sigma_q, \Sigma_{pq}$ are not diagonal. Hence, the number of unknown variables (injection cross-correlations) increase in the case with missing nodes, and the current algorithms will not be able to estimate them. Exact injection estimation under correlated injections will be analyzed in future work. On the other hand, the correctness of the estimate of topology and injection variances (same node) by our algorithms under small correlated injections can be analyzed using perturbation theory. In particular the correlated covariance matrix $\Sigma_p$ (similarly for $q$ and $pq$) can be expressed as $\Sigma^{uc}_p+\Delta_p$, where $\Sigma^{uc}_p$ is the diagonal matrix of injection variances and $\Delta_p$ is the matrix of cross-covariances with zero in the diagonal. Consequently voltage covariances can be expressed as covariances under $\Sigma^{uc}_p$ and an error term. Thus voltage trends and equalities used in the learning algorithms can be satisfied up to a threshold for small injection correlations (small $\Delta_p$), and the correct topology can be learnt. We plan to study bounds on maximum injection correlation under which our algorithms are provably correct in future work.}

\textbf{Finite sample effect:} Empirically computed values of $\phi$ may differ from their true values and hence equalities and inequalities used in Algorithm \ref{alg:2} and \ref{alg:3} may only be satisfied approximately. As such we use user-defined tolerances ($\tau_1$ in Eq.~(\ref{first}), $\tau_2$ in Eq.~(\ref{third}), and $\tau_3$ in Theorem \ref{Theoreminequality}) to establish if the desired equalities/inequalities are true. To reduce the effect of varying injection covariances, we use thresholds based on the relative values (expressed as ratio) in the equality relations to determine their correctness.
In the next section, we discuss the performance of our learning algorithms in test distribution networks, notably on voltage samples generated by non-linear AC power flows.

\section{Experiments}
\label{sec:experiments}
\begin{figure}[bt]
\centering
\subfigure[]{\includegraphics[width=0.14\textwidth]{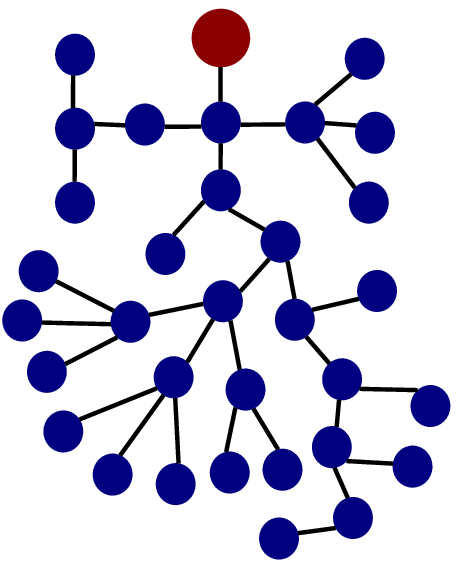}\label{fig:case33_algo1}}\hfill
\subfigure[]{\includegraphics[width=0.14\textwidth]{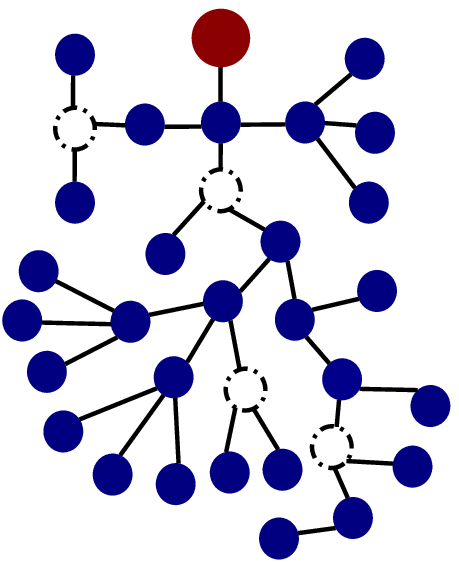}\label{fig:case33_algo2}}\hfill
\subfigure[]{\includegraphics[width=0.14\textwidth]{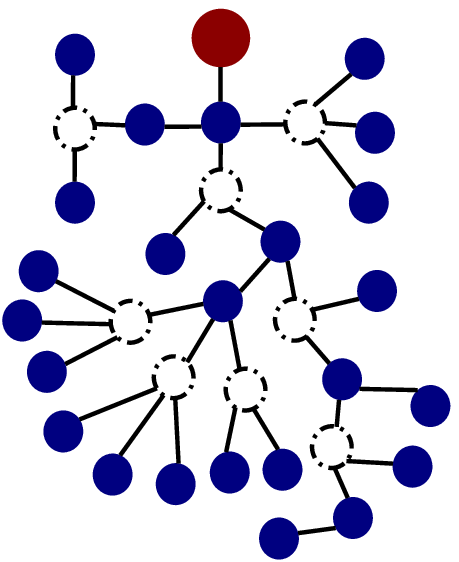}\label{fig:case33_algo3}}
\squeezeup
\hspace*{\fill}
\caption{Modified $33$-bus test case \cite{matpower} with observed nodes (solid blue), missing nodes (uncolored circles) for Algorithms \ref{alg:1} (a), \ref{alg:2} (b), and \ref{alg:3} (c)}\label{fig:case33}
\end{figure}
\subsection{Comparison of LC-PF and AC-PF}\label{sec:compare}
We demonstrate the accuracy of LC-PF Eqs.~(\ref{LC_PF}) for the modified $33$-bus test case \cite{matpower} in Fig.~\ref{fig:case33_algo1}. The modification is done to ensure hidden nodes in subsequent simulations have degree greater than two. Fig.~\ref{fig:approx} compares voltage magnitudes at non-substation buses computed by LC-PF with AC-PF solver in Matpower \cite{matpower} for two different variances in nodal injections relative to mean injection (range of $10^{-2}$ and $10^{-3}$). The voltages are measured relative to the per unit (p.u.) value at the reference bus. Note that the values are close. Hence theoretical algorithms proven for linearized power flow are able to perform estimation tasks with Matpower generated voltage measurements as presented next.
\begin{figure}[!hbt]
\centering
\includegraphics[width=0.38\textwidth]{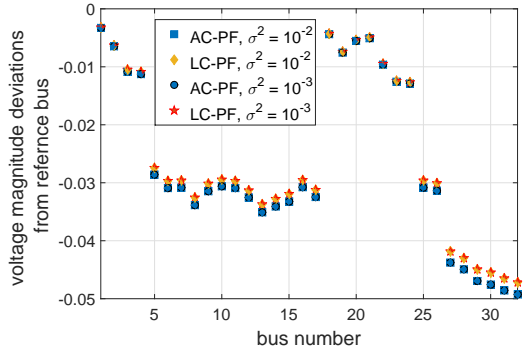}
\vspace{-.10cm}
\caption{Bus Voltage magnitudes (p.u.) by AC-PF (Matpower) and LC-PF (\ref{LC_PF}) for two different ranges of injection variances}\label{fig:approx}
\end{figure}
\squeezeup\squeezeup\subsection{Algorithms' performance}
\begin{figure*}[!ht]
\centering\hfill
\subfigure[]{\includegraphics[width=.33\textwidth]{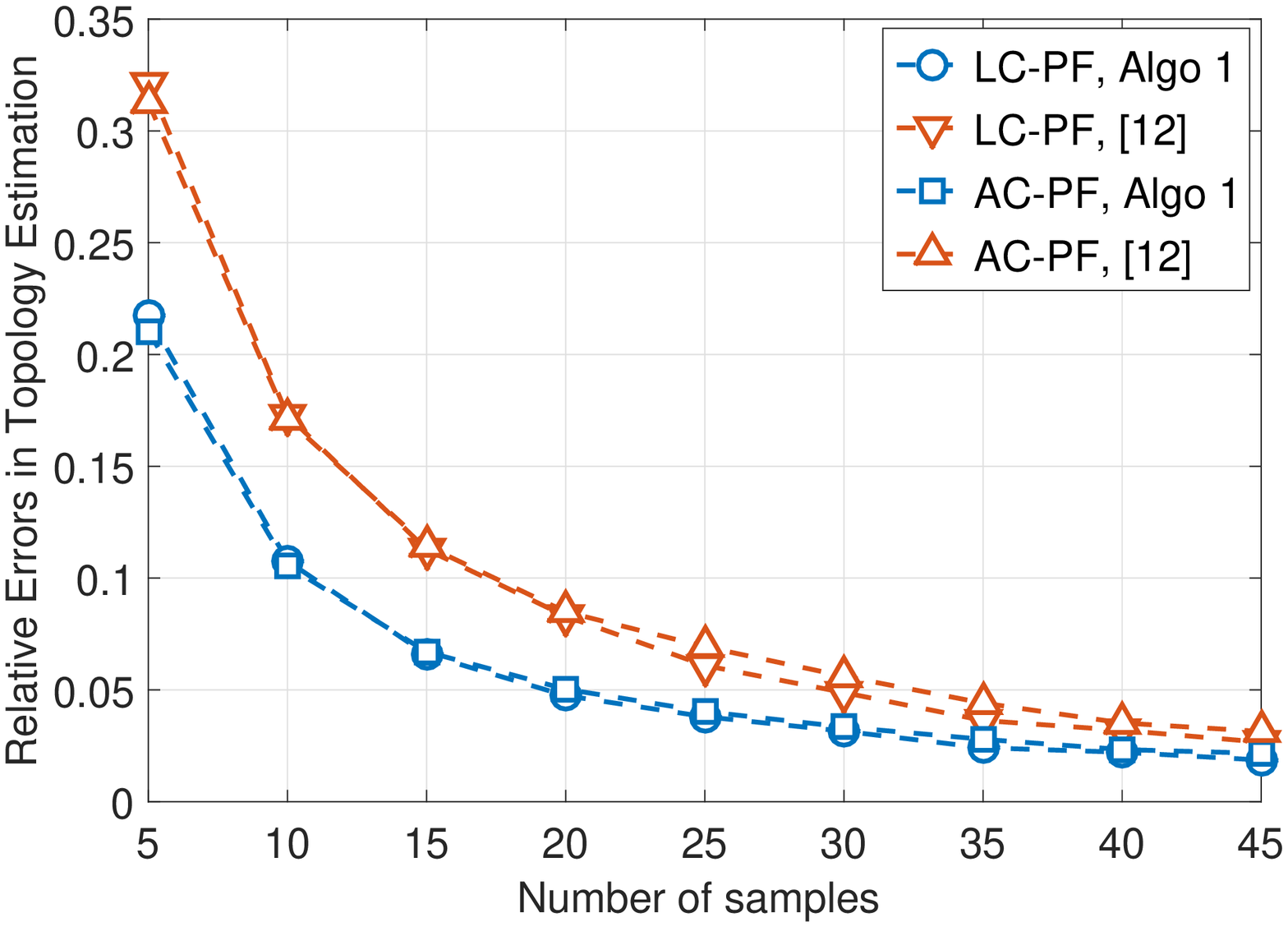}\label{fig:adj_algo1_tcns}}\hfill\subfigure[]{\includegraphics[width=.33\textwidth]{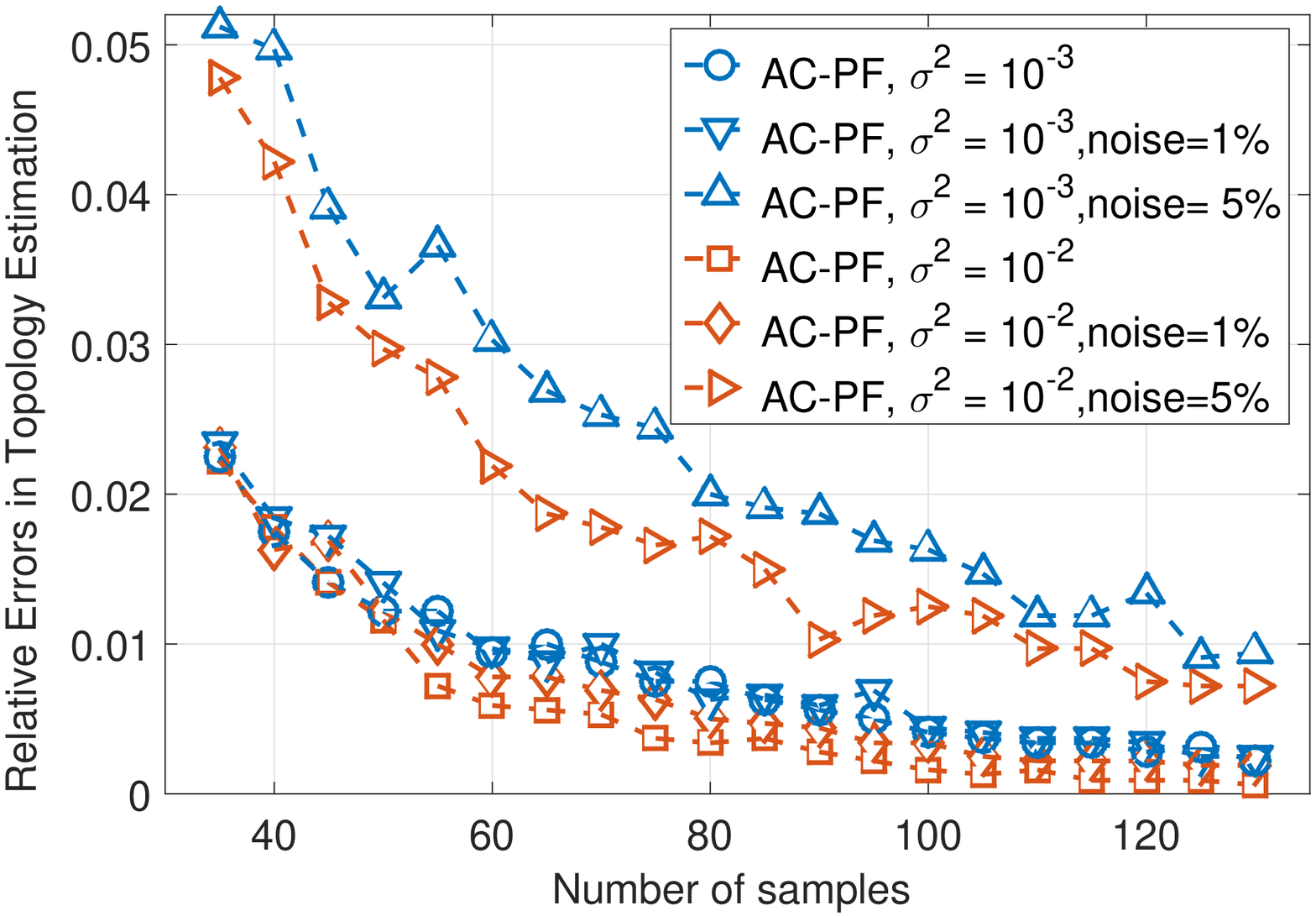}\label{fig:adj_algo1}}\hfill
\subfigure[]{\includegraphics[width=.33\textwidth]{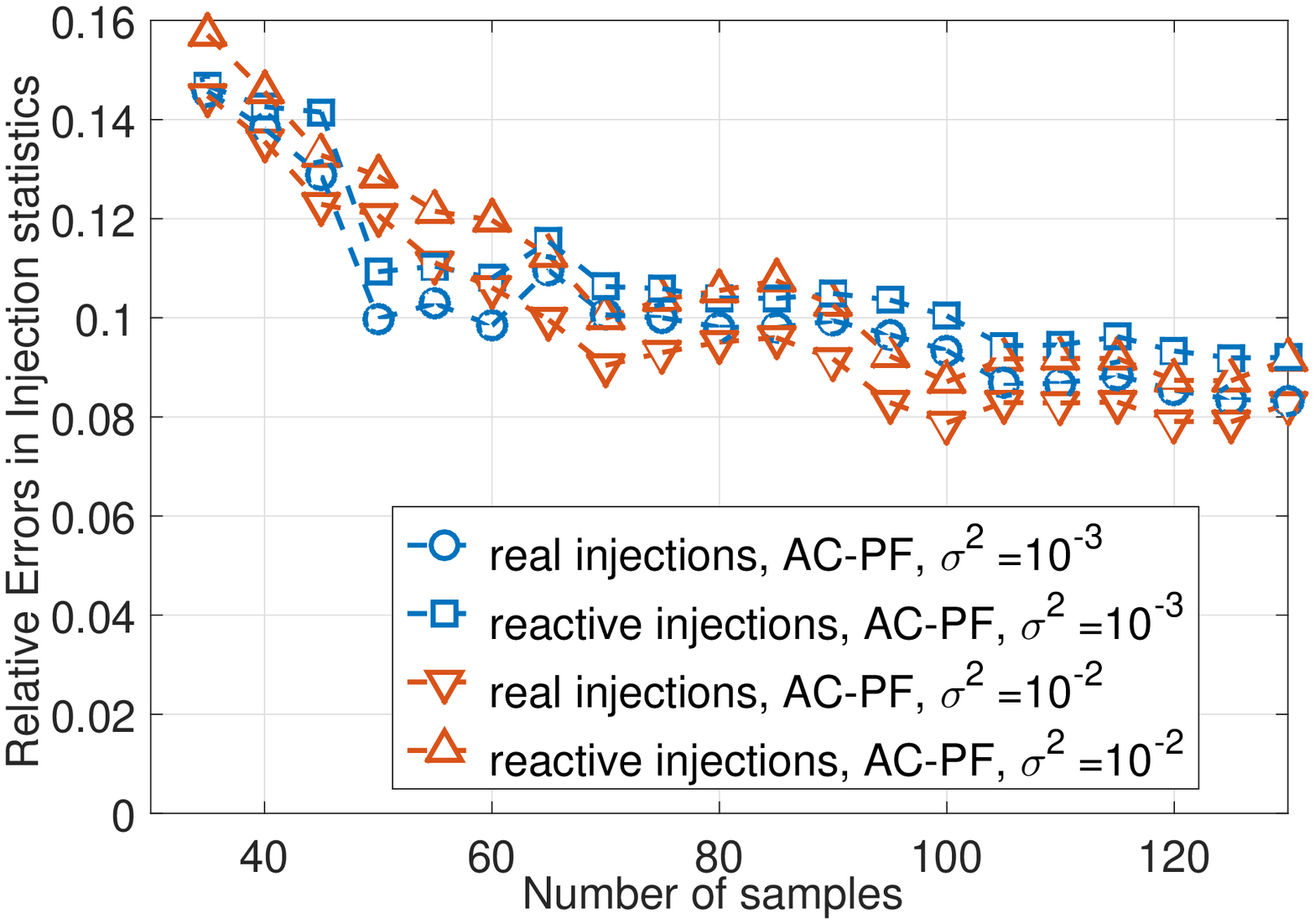}\label{fig:inj_algo1}}\hfill
\caption{(a) Comparison of topology estimation in Algorithm \ref{alg:1} with \cite{dekatcns} for grid in Fig.~\ref{fig:case33_algo1} with injection covariance of order $10^{-3}$ (b) Average relative errors in topology estimation for different noise levels, and (b) injection covariance estimation, v/s number of samples in Algorithm \ref{alg:1} for grid in Fig.~\ref{fig:case33_algo1}. Errors for two injection covariances are simulated.
}\label{fig:alg1}
\end{figure*}
We discuss the performance of our learning Algorithms $1,2,3$ in the test networks listed in Fig.~\ref{fig:case33}. To the operational $32$ edges, we add $50$ additional edges (at random) with similar impedances to create the input permissible edge set $\mathcal{E}_{full}$. To create the input set, we consider Gaussian active and reactive load fluctuations with random covariances selected relative to base loads. We consider two settings where the approximate order of covariances are taken as $10^{-3}$ and $10^{-2}$. The injections are uncorrelated across nodes and used to generate injection samples. These injections are then used to generate voltage samples with non-linear AC-PF solver Matpower \cite{matpower}. {We add independent Gaussian noise of fixed variance to the available voltage measurements to simulate noisy observations.} The set $\mathcal{E}_{full}$ along with the voltage samples and the injection statistics at the observed nodes are available as input to each algorithm. The observed nodes and hidden nodes are selected respecting Assumptions $2$ and $3$ as mentioned later. {Each plot presented in this section depicts average results over $1000$ independent realizations.}

{We first consider Algorithm \ref{alg:1} where voltages at all nodes are observed. We consider increasing number of noiseless LC-PF and AC-PF samples and present relative errors in topology estimation in Fig.~\ref{fig:adj_algo1_tcns}. The relative errors are computed as the differences between estimated and true edge sets measured relative to the number of total edges ($32$). Compared to the learning algorithm in \cite{dekatcns}, Algorithm \ref{alg:1} is not iterative and has better accuracy. Crucially, the errors under LC-PF and AC-PF are similar for Algorithm \ref{alg:1} due to sufficient accuracy of LC-PF samples as discussed in Section~\ref{sec:compare}. For the remaining simulations in the paper, we focus on AC-PF samples only.}

{In Fig.~\ref{fig:adj_algo1}, we present relative errors in topology estimation for different sample sizes and varying noise variances. We consider three noise variance settings: (a) noiseless, (b) $1\%$, and (c) $5\%$, relative to the measurement variance. Observe that the errors are insignificant beyond $60$ samples for both injection covariance settings considered, when noise variance is $1\%$ or less. For the $5\%$ noise case, the decay in error is slower and it takes around $120$ samples to reach the same level of accuracy.} The accuracy of estimated active and reactive injection statistics from noiseless AC-PF voltage samples is presented in Fig.~\ref{fig:inj_algo1}. The errors in injection statistics are measured relative to their true values and averaged over all nodes. Note that the estimate improves at higher samples as empirical moments are more accurate.

Next, we consider Algorithm \ref{alg:2} where missing nodes are separated by greater than two hops. We consider the setting in Fig.~\ref{fig:case33_algo2} with $4$ missing nodes and present results of topology and injection statistics estimation in Figs.~\ref{fig:adj_algo2} and \ref{fig:inj_algo2} respectively. Observe that the number of topology errors under both cases of nodal injection statistics reduce as the number of samples increase. However compared to Fig.~\ref{fig:adj_algo1} for no missing nodes, the number of samples needed is much higher. {Moreover the errors for noise variance of $1\%$ are much higher than than for the noiseless and $1\%$ noise setting. This is due to the fact that Algorithm \ref{alg:2} uses equality constraints (\ref{first},\ref{third}) to confirm true edges. At lower samples and higher noise, these constraints may not be satisfied up to the thresholds (pre-selected for the noiseless case), and hence errors are higher.}
\begin{figure}[!bh]
\centering
\subfigure[]{\includegraphics[width=.35\textwidth]{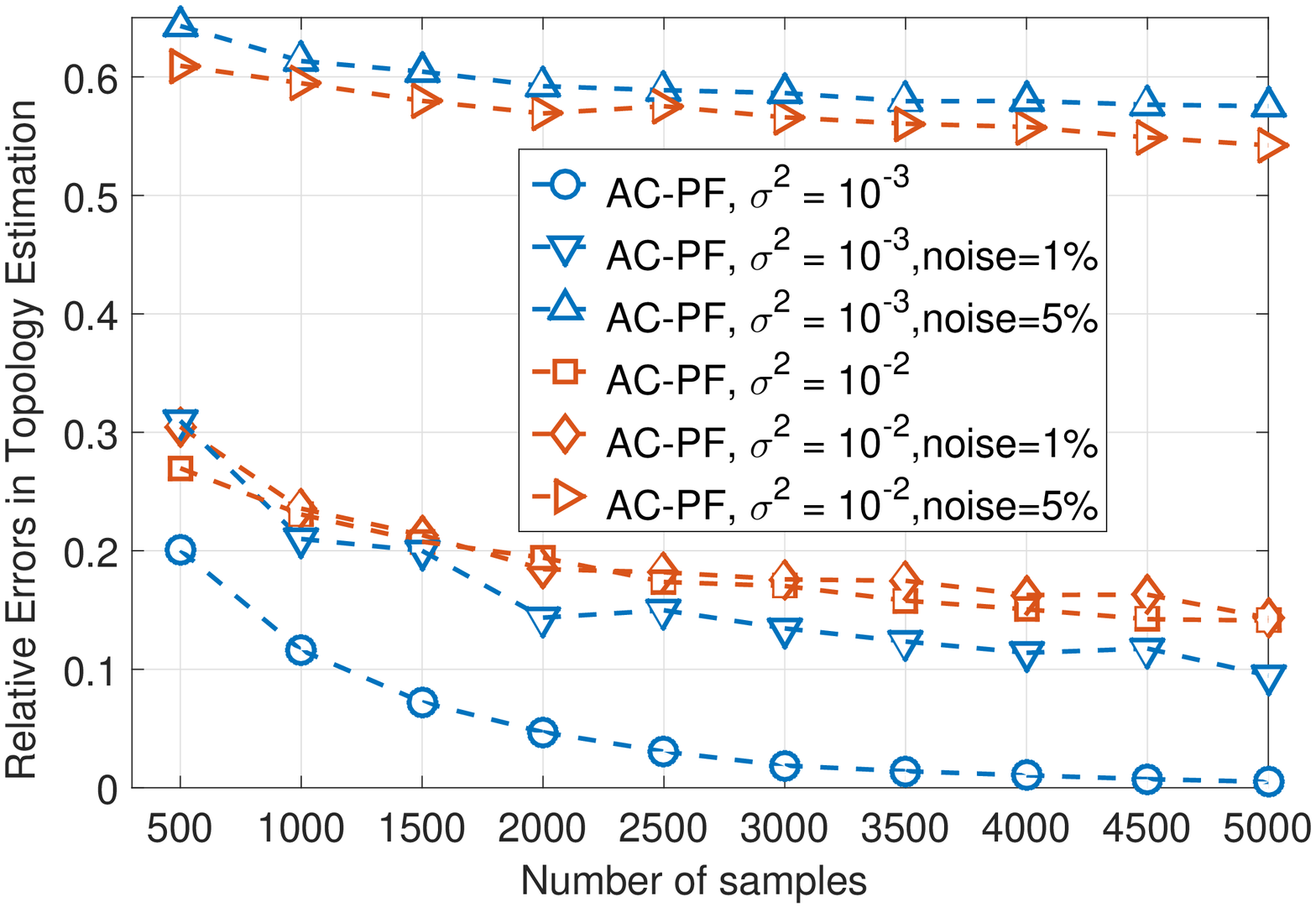}\label{fig:adj_algo2}}
\subfigure[]{\includegraphics[width=.37\textwidth]{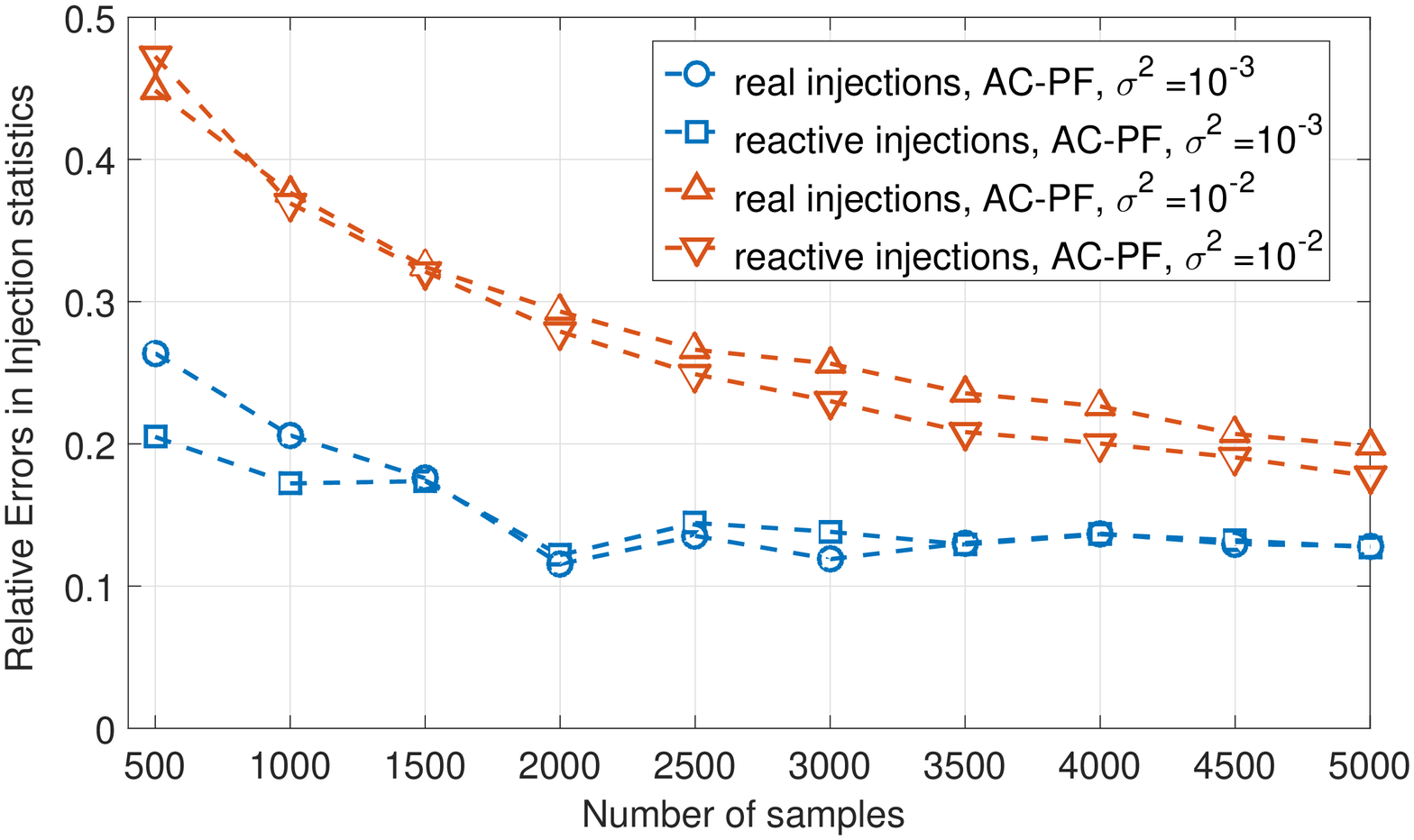}\label{fig:inj_algo2}}\squeezeup
\squeezeup\squeezeup\caption{ Average relative errors in topology estimation for different noise levels, and (b) injection covariance estimation, v/s number of samples in Algorithm \ref{alg:2} for grid in Fig.~\ref{fig:case33_algo2}}\label{fig:alg2}
\end{figure}
Finally, we consider Algorithm \ref{alg:3} that operates when hidden nodes are non-adjacent. We consider the setting in Fig.~\ref{fig:case33_algo3} with $8$ missing nodes ($4$ more than for Algorithm \ref{alg:2}). The performance for topology and injection statistics estimation are presented in Figs.~\ref{fig:adj_algo3} and \ref{fig:inj_algo3} respectively, for increasing voltage sample sizes. As before, the estimation errors decay with an increase in sample sizes for both injection covariance ranges selected. On expected lines, the performance of topology estimation worsens on increasing the noise level and decreasing the number of samples considered. Further note that the decay of errors in estimated injection statistics with increasing number of samples in each of the three algorithms is lower than that for topology estimation. This is not surprising as differences in estimated and true topologies are integer valued and depend on satisfaction of equality and inequality constraints within some threshold. On the other hand, errors in injection statistics are induced by real-valued differences with the true statistics that depend on empirical estimates and not just on the estimate of the true topology.
\begin{figure}[!bt]
\centering
\subfigure[]{\includegraphics[width=.36\textwidth]{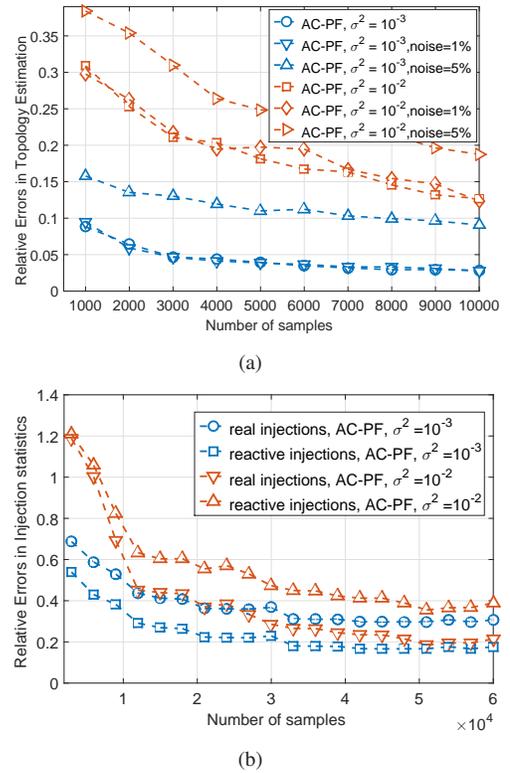}\label{fig:adj_algo3}}
\subfigure[]{\includegraphics[width=.36\textwidth]{33bus_inj_algo3_new.eps}\label{fig:inj_algo3}}
\caption{ Average relative errors in topology estimation for different noise levels, and (b) injection covariance estimation, v/s number of samples in Algorithm \ref{alg:3} for grid in  Fig.~\ref{fig:case33_algo3}}\label{fig:alg3}
\end{figure}

{\textbf{Effect of Threshold:} Note that unlike Algorithm \ref{alg:1}, Algorithms \ref{alg:2},\ref{alg:3} use thresholds $\tau_1$, $\tau_2$, and Algorithm \ref{alg:3} additionally uses $\tau_3$. The thresholds are picked to ensure correctness of output at large sample values ($4\times 10^4$ samples). To understand the impact of selected thresholds, we consider $5000$ noiseless voltage samples in Algorithms \ref{alg:2}, \ref{alg:3} for both injection covariances and vary each threshold relative to their pre-selected values, while fixing the others. It is clear from Figs.~\ref{fig:thres_algo2} and \ref{fig:thres_algo3} that Algorithms \ref{alg:2} and \ref{alg:3} are both more sensitive to $\tau_1$ (used for Eq.~(\ref{first})) than $\tau_2,\tau_3$ respectively. This can be explained as $\tau_1$ enables the preliminary determination of true parent-child edges that affects edge identification in follow-up steps. We postpone a theoretical study of correct threshold selection based on historical to future work.}
\begin{figure}[!bt]
\centering
\subfigure[]{\includegraphics[width=.36\textwidth]{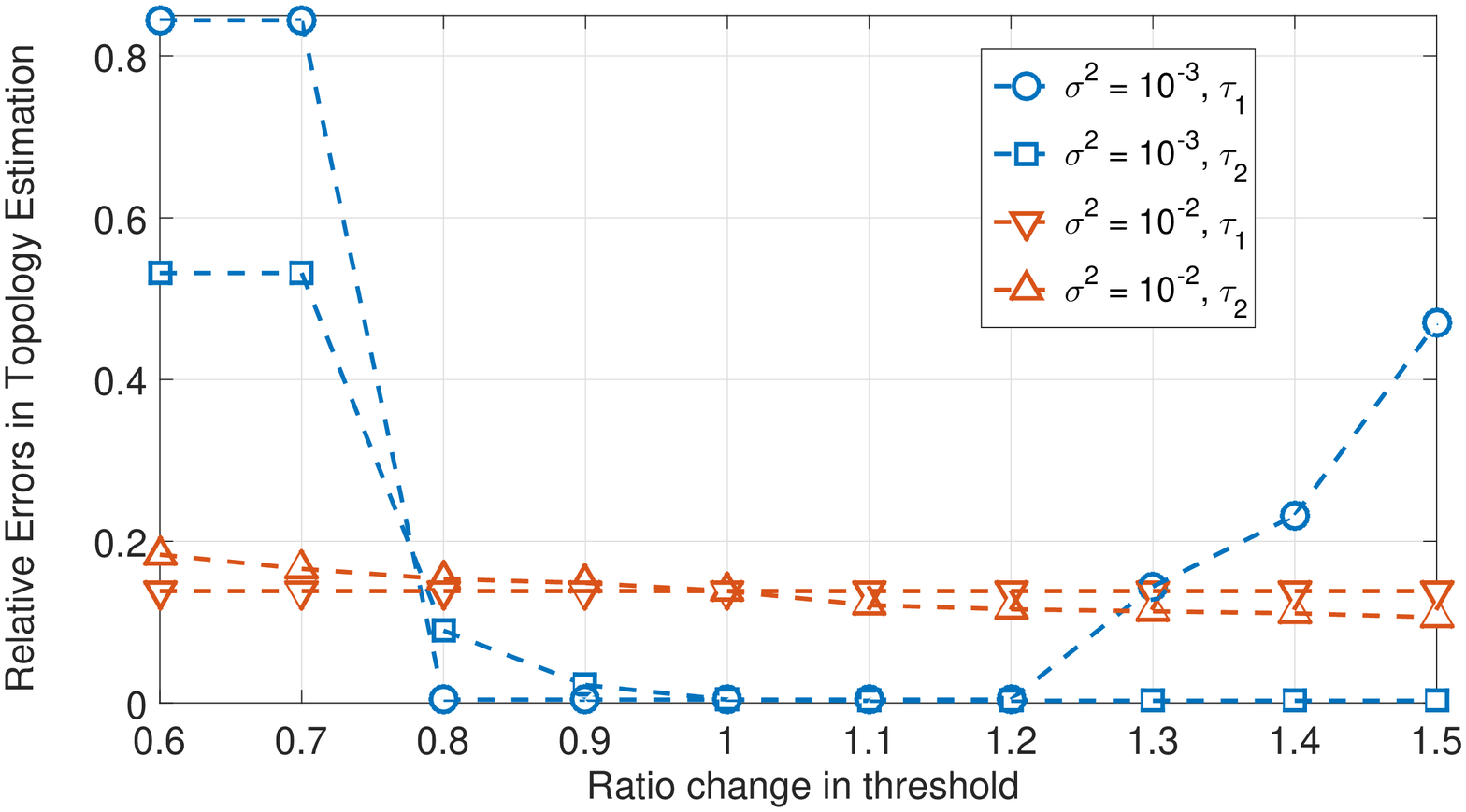}\label{fig:thres_algo2}}
\subfigure[]{\includegraphics[width=.36\textwidth]{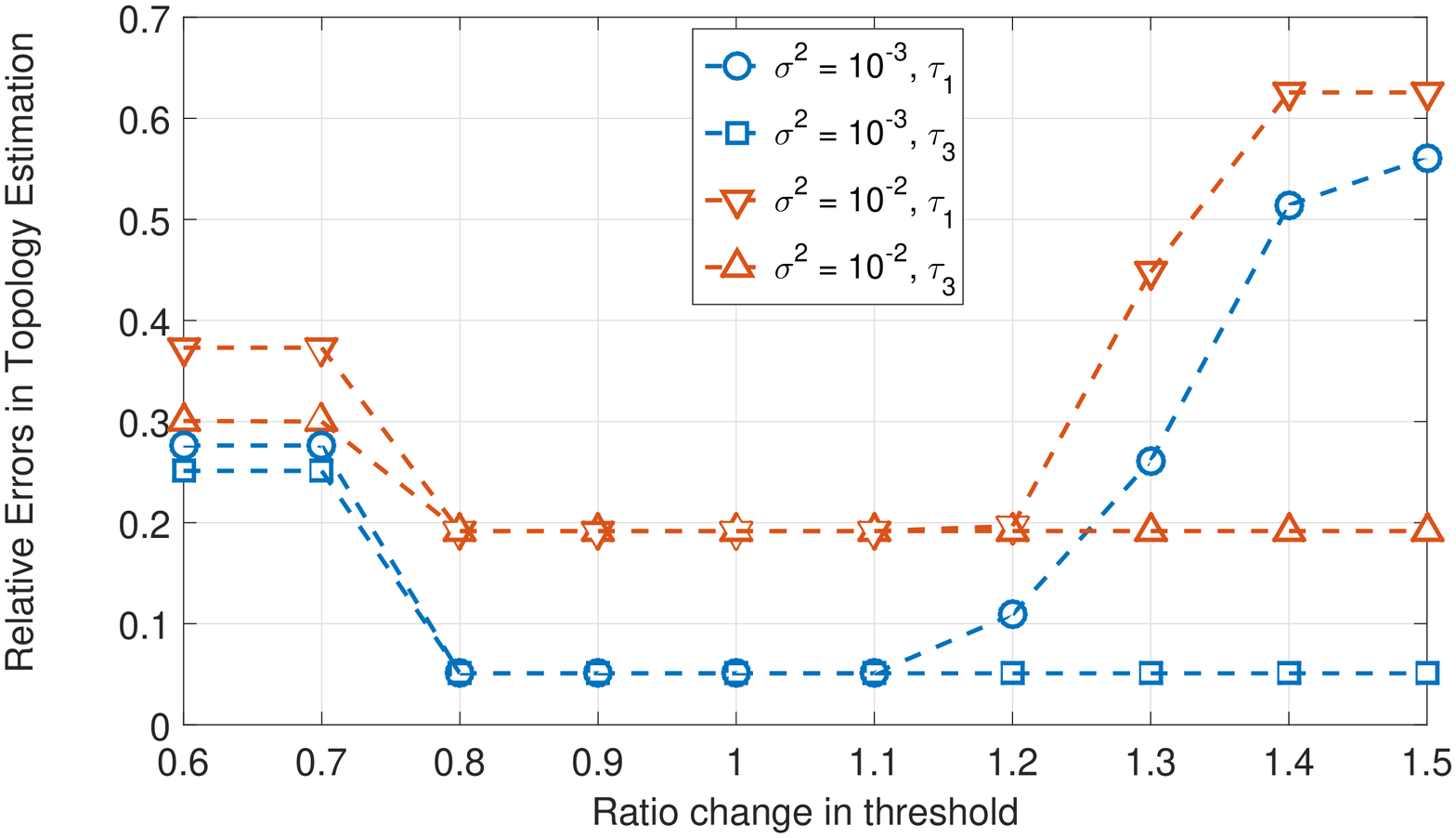}\label{fig:thres_algo3}}
\caption{Errors in topology estimation with relative change in thresholds (a) ($\tau_1, \tau_2$) in Algorithm \ref{alg:2}, and (b) ($\tau_1, \tau_3$) in Algorithm \ref{alg:3} for $5000$ voltage samples.}\label{fig:thres}
\end{figure}

\section{Conclusions}
\label{sec:conclusions}
This paper discusses algorithms for radial distribution grids to estimate the operational topology and injection statistics of missing nodes using voltage measurements and injection statistics at a subset of the grid nodes. We show that the learning algorithms provably learn the exact topology when all missing nodes are non-adjacent and have degree greater than two. Compared to previous work, the learning algorithms in this paper are able to handle a greater fraction of hidden nodes and require less information regarding them. Simulation results on test cases demonstrate the performance of the algorithms on realistic voltage samples generated by non-linear AC power flows.

In future we propose to extend the algorithms here to linearized multi-phase distribution networks \cite{dekathreephase,lowlinear}. {A formal understanding of the selection of thresholds and extension of the algorithm to cases with correlated injections are directions of future work.} The novel properties of voltage moments used in algorithm design may have applications in general network flow problems such as gas networks \cite{dekacdc}. We propose to analyze its relation to general graphical models.
\bibliography{sigproc,FIDVR,SmartGrid,voltage,trees}
\section{Appendix}
\subsection{Proof of Theorem \ref{Theoremcases}}
\label{sec:proof1}
\begin{figure}[!ht]
\centering
\subfigure[]{\includegraphics[width=0.13\textwidth]{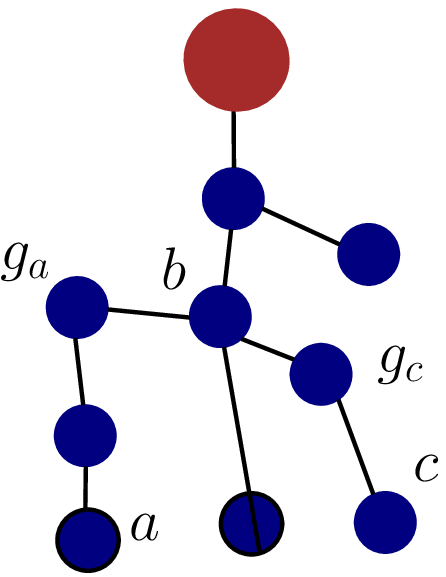}\label{fig:item2}}\hfill
\subfigure[]{\includegraphics[width=0.12\textwidth]{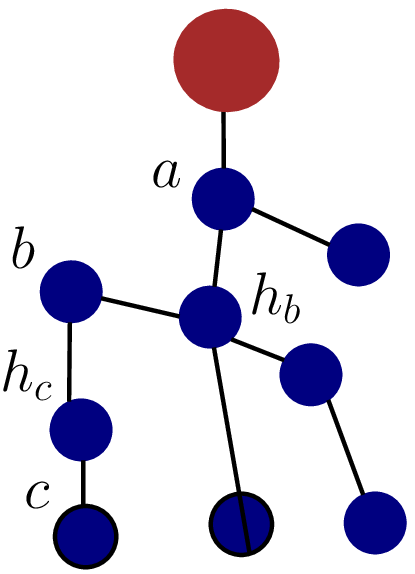}\label{fig:item3}}\hfill
\subfigure[]{\includegraphics[width=0.11\textwidth]{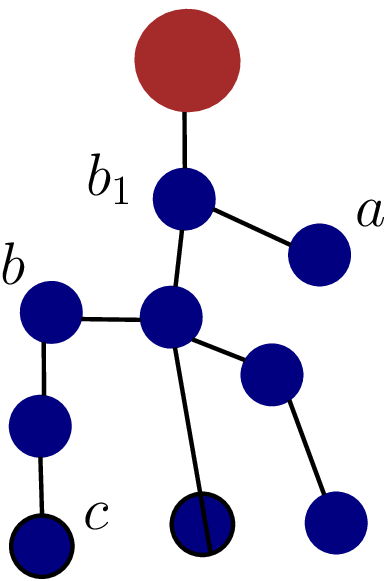}\label{fig:item4}}
\hspace*{\fill}
\caption{Distribution grid tree for proof of Theorem \ref{Theoremcases}.}
\end{figure}
\begin{proof}
As $b$ lies on the unique path from $a$ to $c$, we have $\mathcal{P}^c\bigcap \mathcal{P}^a \subseteq \mathcal{P}^b$. We first consider the case where $\mathcal{P}^c\bigcap \mathcal{P}^a = \mathcal{P}^b$ as shown in Fig.~\ref{fig:item2}. Here, both $a$ and $c$ are descendants of $b$ in $\mathcal{T}$. Let $g_a,g_c$ be $b$'s children on paths to $a$ and $c$ respectively. Clearly $\mathcal{D}^{g_a}$ and $\mathcal{D}^{g_c}$ are disjoint subsets of $\mathcal{D}^b$. Using Eq.~(\ref{Hrxinv}) and observing paths in this configuration, the following hold,
\begin{align}
&H^{-1}_{1/r}(a,d)= H^{-1}_{1/r}(b,d)\text{~~for $d \notin \mathcal{D}^{g_a}$}\label{cancel1}\\
&H^{-1}_{1/r}(c,d)= H^{-1}_{1/r}(b,d) \text{~~for $d \notin\mathcal{D}^{g_c}$}\label{cancel2}\\
&H^{-1}_{1/r}(a,d)= H^{-1}_{1/r}(c,d) \text{~~for $d \notin \mathcal{D}^{g_a}, \mathcal{D}^{g_c}$}\label{cancel3}
\end{align}
Similar results hold for $H^{-1}_{1/x}$ as well. Consider $\phi_{ab} + \phi_{bc}$ where the expansion of $\phi$ is given by Eq.~(\ref{usediff_1}). We denote the three additive terms on right side of Eq.~(\ref{usediff_1}) by $\phi^1,\phi^2,\phi^3$ for shorter expressions. We have
{\footnotesize
\begin{align}
 \phi^1_{ab} + \phi^1_{bc} = &\sum_{d \in \mathcal{D}^{g_a}}(H^{-1}_{1/r}(a,d)- H^{-1}_{1/r}(b,d))^2\Omega_p(d,d)\nonumber\\
 &~+\sum_{d \in \mathcal{D}^{g_c}}(H^{-1}_{1/r}(c,d)- H^{-1}_{1/r}(b,d))^2\Omega_p(d,d)\\
&= \sum_{d \in \mathcal{D}^{g_a}}(H^{-1}_{1/r}(a,d)- H^{-1}_{1/r}(c,d))^2\Omega_p(d,d)\nonumber\\
&~+\sum_{d \in \mathcal{D}^{g_c}}(H^{-1}_{1/r}(a,d)- H^{-1}_{1/r}(c,d))^2\Omega_p(d,d)=\phi^1_{ac}
\end{align}}
where the first equality follows from Eq.~(\ref{cancel1}), while the second equality uses Eqs.~(\ref{cancel2},\ref{cancel3}). Using the same logic for $\phi^2,\phi^3$ and adding them proves $\phi_{ab} + \phi_{bc}=\phi_{ac}$.

Next consider the case of $\mathcal{P}^c\bigcap \mathcal{P}^a \subset \mathcal{P}^b$. Here we first look at the configuration in Fig.~\ref{fig:item3} where $c$ is a descendant of $b$, which itself is a descendant of $a$. Let $h_b$ be the child of $a$ on path to $b$ and $h_c$ be the child of $b$ on path to $c$. As before, we consider $\phi = \phi^1+\phi^2+\phi^3$. Writing the expressions for $\phi^1$, we have

{\footnotesize
\begin{align}
 \phi^1_{ab} +\phi^1_{bc}&= ~\smashoperator[lr]{\sum_{d \in \mathcal{D}^{h_b}}}(H^{-1}_{1/r}(b,d)- H^{-1}_{1/r}(a,d))^2\Omega_p(d,d)\nonumber\\
 &~+ \sum_{d \in \mathcal{D}^{h_c}}(H^{-1}_{1/r}(c,d)- H^{-1}_{1/r}(b,d))^2\Omega_p(d,d)\\
 &=~\smashoperator[lr]{\sum_{d \in \mathcal{D}^{h_b}-\mathcal{D}^{h_c}}}(H^{-1}_{1/r}(b,d)- H^{-1}_{1/r}(a,d))^2\Omega_p(d,d)\nonumber\\
 ~+ \sum_{d \in \mathcal{D}^{h_c}}&((H^{-1}_{1/r}(b,d)- H^{-1}_{1/r}(a,d))^2+(H^{-1}_{1/r}(c,d)- H^{-1}_{1/r}(b,d))^2)\Omega_p(d,d)\\
 &\leq~\sum_{d \in \mathcal{D}^{h_b}}(H^{-1}_{1/r}(c,d)- H^{-1}_{1/r}(a,d))^2\Omega_p(d,d)= ~\phi^1_{ac}
\end{align}}
where we used the property that $H^{-1}_{1/r}(a,d)< H^{-1}_{1/r}(b,d)\leq H^{-1}_{1/r}(c,d)<0$ for $d \in \mathcal{D}^{h_b}$. As similar inequalities hold for $\phi^2$ and $\phi^3$, we have $\phi_{ab} + \phi_{bc} < \phi_{ac}$ for the configuration in Fig.~\ref{fig:item3}. By symmetry it is easy to see that the inequality holds when positions of $a$ and $c$ are exchanged. For any other configuration with $\mathcal{P}^c\bigcap \mathcal{P}^a \subset \mathcal{P}^b$ (Example in Fig.~\ref{fig:item4})), one can find an intermediate node $b_1$ such that $\mathcal{P}^c\bigcap \mathcal{P}^a = \mathcal{P}^{b_1}$. Then using the above analysis,
\begin{align}
\phi_{ab}+\phi_{bc} =\phi_{ab_1}+\phi_{bb_1}+\phi_{bc}< \phi_{ab_1}+\phi_{b_1c} = \phi_{ac}\nonumber
\end{align}
Thus it is true for all configurations with $\mathcal{P}^c\bigcap \mathcal{P}^a \subset \mathcal{P}^b$. Hence proved.
\end{proof}
\subsection{Proof of Theorem \ref{theorem:sample_complexity}}\label{sec:proofsample}
For zero mean injections, the voltage magnitudes (measured as deviations) are also zero mean. To prove the sample complexity result for Algorithm \ref{alg:1}, we first determine maximum empirical errors in $\phi_{ab} =\mathbb{E}[v_a -v_b]^2 $ that Algorithm \ref{alg:1} can tolerate. The following result holds for $\phi$.
\begin{theorem}\label{thm:order1}
Consider node $a$ in radial grid $\mathcal T$ with nodes $\mathcal V$ and edge set $\mathcal E$. $\forall b\neq a, (ab) \notin {\mathcal E}$, there exists some $c$ on path from $a$ to $b$ with $(ac) \in \mathcal E$ such that $\phi_{ab} \geq \phi_{ac} + k_1$ where \\
$k_1 = \min(r^2_{\min},x^2_{\min})\min\limits_{d \in {\mathcal V}}(\Omega_p(d,d)+\Omega_p(d,d)+2\Omega_{pq}(d,d))$. Here $r_{\min}, x_{\min}$ are the minimum values of resistance, reactance in grid $\mathcal T$.
\end{theorem}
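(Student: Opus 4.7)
The plan is to take $c$ to be the unique neighbor of $a$ lying on the tree-path from $a$ to $b$. Since $a \neq b$ and $(ab) \notin \mathcal{E}$, this path has length at least two, so such a $c$ exists, satisfies $(ac) \in \mathcal{E}$, and differs from $b$. Because $c$ lies on the path from $a$ to $b$, Theorem \ref{Theoremcases} applied to the triple $(a,c,b)$ yields $\phi_{ab} \geq \phi_{ac} + \phi_{cb}$ in both of its two cases (equality in case 1, strict in case 2). It therefore suffices to prove $\phi_{cb} \geq k_1$, after which the claim follows immediately.

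To lower-bound $\phi_{cb}$, I would first dispose of the subcase in which $c$ and $b$ are adjacent in $\mathcal T$. Using the parent-child expansion in Eq.~(\ref{first}) with edge $(cb)$, each summand indexed by a descendant $d$ of the child endpoint has the form $r_{cb}^2\Omega_p(d,d) + x_{cb}^2\Omega_q(d,d) + 2r_{cb}x_{cb}\Omega_{pq}(d,d)$. Combining $r_{cb} \geq r_{\min}$, $x_{cb} \geq x_{\min}$, the non-negativity $\Omega_{pq}(d,d) \geq 0$ from Assumption 1, and the elementary estimate $r_{\min}x_{\min} \geq \min(r_{\min}^2, x_{\min}^2)$, one sees that each summand dominates $\min(r_{\min}^2, x_{\min}^2)(\Omega_p(d,d) + \Omega_q(d,d) + 2\Omega_{pq}(d,d))$. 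Since the descendant set is non-empty and every summand is non-negative, retaining only the minimizing $d \in \mathcal V$ produces $\phi_{cb} \geq k_1$.

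If $c$ and $b$ are not adjacent, I would let $c_1$ be the neighbor of $c$ on the path to $b$ and invoke Theorem \ref{Theoremcases} once more on $(c,c_1,b)$ to obtain $\phi_{cb} \geq \phi_{cc_1} + \phi_{c_1 b} \geq \phi_{cc_1}$. The previous paragraph then applies verbatim to the adjacent pair $(c,c_1)$ to give $\phi_{cc_1} \geq k_1$, closing the argument.

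The main obstacle is verifying that the cross-term $2r_{cb}x_{cb}\Omega_{pq}(d,d)$ does not destroy the single-edge bound. Here Assumption 1 is indispensable: only the sign constraint $\Omega_{pq}(d,d) \geq 0$ guarantees non-negative contribution, and only the elementary inequality $r_{\min}x_{\min} \geq \min(r_{\min}^2, x_{\min}^2)$ permits the clean factorization that identifies the constant $k_1$. Once this algebraic step is secured, the reduction from arbitrary non-edge pairs $(a,b)$ to a single adjacent pair via two applications of Theorem \ref{Theoremcases} is essentially automatic.
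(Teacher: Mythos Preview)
Your argument is correct and follows essentially the same route as the paper: apply Theorem~\ref{Theoremcases} to peel off the first edge $(ac)$, reduce to lower-bounding $\phi$ on a single edge, and then invoke Eq.~(\ref{first}) together with Assumption~1 to extract the constant $k_1$. The only cosmetic difference is that the paper bounds $\phi_{bb_1}$ using $b$'s neighbor $b_1$ on the path toward $c$, whereas you bound $\phi_{cc_1}$ using $c$'s neighbor $c_1$ on the path toward $b$; both choices yield the same single-edge estimate, and your explicit handling of the cross term via $r_{\min}x_{\min}\ge\min(r_{\min}^2,x_{\min}^2)$ makes the algebra more transparent than the paper's terse statement.
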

\begin{proof}
Using Theorem \ref{Theoremcases}, it is clear that for any two non-neighbor nodes $a,b$ connected through $a$'s neighbor $c$, $\phi_{ab}-\phi_{ac} \geq\phi_{bb_1}$, where $b_1$ is $b$'s neighbor on path to $c$. Using Eq.~(\ref{first}) in Theorem \ref{Theoremcases2}, $\phi_{bb_1}$ is upper-bounded by $k_1$.
\end{proof}
Note that constant $k_1$ does not scale with the size of the network. For empirically computed $\hat{\phi}$'s, the correct output of Algorithm \ref{alg:1} follows from:
\begin{theorem}\label{thm:order2}
If $\forall a\neq b \in {\mathcal V}$, empirically computed $\hat{\phi}_{ab}$ satisfies $|\hat{\phi}_{ab}-\phi_{ab}|< k_1/2$, then Algorithm \ref{alg:1} outputs the correct topology.
\end{theorem}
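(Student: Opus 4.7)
The plan is to prove that the deviation bound $|\hat{\phi}_{ab}-\phi_{ab}|<k_1/2$ preserves the cut optimality of every true edge, so the minimum weight spanning tree constructed in Step 2 of Algorithm \ref{alg:1} under empirical $\hat{\phi}$-weights coincides with $\mathcal{T}$. For each true edge $(uv)\in\mathcal{E}$, removing $(uv)$ from $\mathcal{T}$ induces a cut $(S,\mathcal{V}\setminus S)$ with $u\in S$ and $v\in\mathcal{V}\setminus S$. I will verify that $\hat{\phi}_{xy}>\hat{\phi}_{uv}$ for every spurious edge $(xy)\in\mathcal{E}_{full}\setminus\mathcal{E}$ crossing this cut; the cut property of minimum spanning trees then guarantees $(uv)$ lies in the empirical tree, and since this holds for all $|\mathcal{V}|-1$ true edges the two spanning trees agree.

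The core population-level inequality I will establish is $\phi_{xy}-\phi_{uv}\geq k_1$, which combined with $|\hat{\phi}_{ab}-\phi_{ab}|<k_1/2$ immediately yields $\hat{\phi}_{xy}-\hat{\phi}_{uv}>0$. Because $(uv)$ lies on the unique $x$-to-$y$ path in $\mathcal{T}$, two applications of Theorem \ref{Theoremcases} (first with the triple $(x,u,y)$, then with $(u,v,y)$) give $\phi_{xy}\geq\phi_{xu}+\phi_{uv}+\phi_{vy}$. It therefore suffices to show $\phi_{xu}+\phi_{vy}\geq k_1$, which I will do by case analysis on the overlap of $\{x,y\}$ with $\{u,v\}$.

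The case $\{x,y\}=\{u,v\}$ is impossible since $(xy)$ is spurious while $(uv)$ is true. When $x=u$ and $y\neq v$, the edge $(uy)$ is itself non-tree and $v$ is $u$'s neighbor on the $u$-to-$y$ path, so Theorem \ref{thm:order1} applied to the pair $(u,y)$ directly delivers $\phi_{uy}\geq\phi_{uv}+k_1$; the mirror case $y=v$, $x\neq u$ is symmetric. When $x\neq u$ and $y\neq v$, I will show each of $\phi_{xu}$ and $\phi_{vy}$ is itself at least $k_1$: if $(xu)\in\mathcal{E}$, Eq.~(\ref{first}) with $x\in\mathcal{D}^x$ gives $\phi_{xu}\geq\min(r^2_{\min},x^2_{\min})(\Omega_p(x,x)+\Omega_q(x,x)+2\Omega_{pq}(x,x))\geq k_1$, and if $(xu)\notin\mathcal{E}$ Theorem \ref{thm:order1} yields $\phi_{xu}\geq\phi_{xc}+k_1\geq k_1$ for the tree-neighbor $c$ of $x$ pointing toward $u$; $\phi_{vy}$ is bounded identically.

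The main obstacle is the case split itself together with verifying that $\phi_{xu}\geq k_1$ holds regardless of whether $(xu)$ is a tree edge, a non-tree permissible edge, or not in $\mathcal{E}_{full}$ at all; crucially the bound is a statement about true voltage second moments and so does not depend on membership in $\mathcal{E}_{full}$. Once the population gap $\phi_{xy}-\phi_{uv}\geq k_1$ is in place across all three subcases, it propagates to the empirical strict inequality on every spurious edge in the cut of $(uv)$, and the cut characterization of minimum spanning trees completes the proof.
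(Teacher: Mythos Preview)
Your proof is correct and shares the paper's core strategy---establish a population-level gap of at least $k_1$ between each true edge and any competing non-tree edge, then let the $k_1/2$ deviation hypothesis absorb the empirical error---but it is more careful about \emph{which} competing edges must be checked. The paper's proof fixes a true edge $(ab)$ and verifies only $\hat\phi_{ab}<\hat\phi_{ac}$ for nodes $c$ reached from $a$ through $b$, appealing directly to Theorem~\ref{thm:order1} for the gap $\phi_{ac}-\phi_{ab}\geq k_1$; this handles spurious edges incident to an endpoint of the true edge but leaves implicit the general spurious edge $(xy)$ crossing the cut with $x\notin\{u,v\}$ and $y\notin\{u,v\}$. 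Your formulation via the cut property, together with the decomposition $\phi_{xy}\geq\phi_{xu}+\phi_{uv}+\phi_{vy}$ from two applications of Theorem~\ref{Theoremcases} and the case split on $\{x,y\}\cap\{u,v\}$, makes that verification explicit. The payoff is a fully rigorous check of the MST cut optimality condition at the cost of a longer argument; the paper's version is terser but relies on the reader to extend the endpoint case to arbitrary crossing edges.
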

\begin{proof}
Consider any $a$, its neighbor $b$ and node $c$ connected to $a$ via $b$. For correct topology estimation, we need $\hat{\phi}_{ab} < \hat{\phi}_{ac}$. This holds as
\begin{align}
 \hat{\phi}_{ac}- \hat{\phi}_{ab} &= (\hat{\phi}_{ac} - \phi_{ac}) + (\phi_{ab}- \hat{\phi}_{ab}) + (\phi_{ac}-\phi_{ab})\nonumber\\
 &> -k_1/2 - k_1/2 +k_1 > 0
\end{align}
Here, we use $x >-|x|$ and Theorem \ref{thm:order1}.\end{proof}
Using $\phi_{ab}=\Omega_v(a,a)+\Omega_v(b,b)-\Omega_v(a,b)-\Omega_v(b,a)$, errors in empirical estimates $\hat{\phi}_{ab}$ can be related to empirical voltage magnitude covariance $\hat{\Omega}_v$ as follows
\begin{align}
 \mathbb{P}[|\hat{\phi}_{ab}-\phi_{ab}|>\frac{k_1}{2}] \leq \mathbb{P}\left[\smashoperator[r]{\bigcup\limits_{i,j \in\{a,b\}}}|\hat{\Omega}_{v}(i,j)-\Omega(i,j)|>\frac{k_1}{8}\right]\label{unionprob}\end{align}
To complete the proof of Theorem \ref{theorem:sample_complexity}, we determine the number of samples necessary to ensure Theorem \ref{thm:order2} holds with high probability. We list the following result from \cite{ravikumar2011high, anandkumar2012high}.
\begin{theorem}\label{thm:gauss}
For a p-dimensional zero-mean Gaussian random vector $\textbf{X} = [X_1, ... X_p]$, empirical covariance from $n$ samples satisfies
$$ \mathbb{P}\left[|\hat{\Omega}(i,j)-\Omega(i,j)| >\epsilon \right]\leq 4\exp{\left[-\frac{n\epsilon^2}{3200\max_i{\Omega^2(i,i)}}\right]}.$$
\end{theorem}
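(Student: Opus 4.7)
The plan is to recognize this as a standard concentration inequality for the empirical covariance of a Gaussian vector, and to derive it by combining the polarization identity for products of Gaussians with a Bernstein-type bound for sums of chi-squared variables. Since $\mathbf{X}$ is zero-mean, the empirical covariance estimate is $\hat{\Omega}(i,j) = \frac{1}{n}\sum_{k=1}^n X_i^{(k)} X_j^{(k)}$, so the deviation $\hat{\Omega}(i,j)-\Omega(i,j)$ is a normalized sum of $n$ i.i.d.\ centered random variables $Y_k = X_i^{(k)} X_j^{(k)} - \Omega(i,j)$.

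First I would reduce the product of correlated Gaussians to a linear combination of squared Gaussians via the polarization identity
\begin{equation*}
X_i X_j = \tfrac{1}{4}\bigl[(X_i+X_j)^2 - (X_i-X_j)^2\bigr].
\end{equation*}
Each of $(X_i \pm X_j)^2$ is a scaled chi-squared random variable with one degree of freedom (the scale being the variance of $X_i \pm X_j$, which is bounded by $2(\Omega(i,i)+\Omega(j,j)) \le 4 \max_\ell \Omega(\ell,\ell)$). Thus $Y_k$ is a centered sub-exponential variable, and its sub-exponential norm, as well as its variance $\mathrm{Var}(X_i X_j) = \Omega(i,i)\Omega(j,j) + \Omega(i,j)^2 \le 2 \max_\ell \Omega^2(\ell,\ell)$, can be controlled uniformly in $i,j$ by $\max_\ell \Omega^2(\ell,\ell)$.

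Next I would apply a standard Bernstein inequality for sums of i.i.d.\ centered sub-exponential random variables: for some absolute constants $c_1, c_2$,
\begin{equation*}
\mathbb{P}\!\left[\left|\tfrac{1}{n}\sum_{k=1}^n Y_k\right| > \epsilon\right] \le 2 \exp\!\left(-c_1 n \min\!\left\{\tfrac{\epsilon^2}{\sigma^2},\tfrac{\epsilon}{K}\right\}\right),
\end{equation*}
where $\sigma^2$ bounds $\mathrm{Var}(Y_k)$ and $K$ bounds the sub-exponential norm. In the small-$\epsilon$ regime (which is the relevant regime when $\epsilon = k_1/8$ is small), the quadratic term dominates, giving an exponent of order $n\epsilon^2/\max_\ell \Omega^2(\ell,\ell)$. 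Tracking the universal constants through the Bernstein bound, and absorbing the factor of two from a union bound over the upper and lower tails into the prefactor, yields exactly the stated form $4\exp(-n\epsilon^2/(3200 \max_i \Omega^2(i,i)))$.

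The main obstacle is the constant-tracking and the need to stay in the sub-Gaussian tail regime of the sub-exponential variable $Y_k$. Specifically, one must verify that the $\epsilon$ of interest (namely $\epsilon = k_1/8$, which is a grid-dependent but sample-independent constant) is small enough relative to the sub-exponential parameter $K$ that the quadratic branch of Bernstein's inequality applies, rather than the linear branch which would give a weaker rate. This is what ultimately justifies writing the tail purely as $\exp(-c n \epsilon^2)$ without an extra $\min\{\cdot,\cdot\}$, and it matches the form quoted from \cite{ravikumar2011high, anandkumar2012high}. The remaining arithmetic — bounding $\mathrm{Var}(X_i X_j)$ via Isserlis' theorem and inserting the numerical constants — is routine.
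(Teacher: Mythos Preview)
The paper does not prove this statement at all: it is quoted verbatim as a known concentration result, with the line ``We list the following result from \cite{ravikumar2011high, anandkumar2012high}'' and no accompanying argument. So there is no ``paper's own proof'' to compare against.

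That said, your sketch is essentially the argument used in those cited references. In particular, Lemma~1 of Ravikumar et al.\ proceeds exactly by writing $X_iX_j$ via the polarization identity as a difference of two scaled $\chi^2_1$ variables, bounding each scale by $\max_\ell \Omega(\ell,\ell)$, and then invoking a chi-squared tail bound (Laurent--Massart / Bernstein) for each piece; the prefactor~$4$ and the constant~$3200$ arise from union-bounding over the two pieces and two tails and from the specific chi-squared moment bound they use. Your observation that the purely quadratic exponent $n\epsilon^2/\max_\ell\Omega^2(\ell,\ell)$ is only valid in a bounded-$\epsilon$ regime is also accurate: the original lemma carries a range restriction of the form $\epsilon \lesssim \max_\ell \Omega(\ell,\ell)$, which the present paper silently assumes (and which is indeed satisfied when the theorem is applied downstream with $\epsilon = k_1/8$). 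In short, your proposal is correct and matches the source; the paper itself simply cites rather than proves.
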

As we consider voltage magnitudes as random variables, we determine the maximum value of diagonal of $\Omega_v$ first.
\begin{theorem}\label{thm:order3}
Under LC-PF model in grid $\mathcal{T}$ with depth $d$ and node set $\mathcal V$ and zero-mean Gaussian injection deviations, each diagonal entry $\Omega_v(a,a)$ is upper bounded by $d^2|\mathcal{V}|k_2$, where
$k_2 = \max(r^2_{\max},x^2_{\max})\max\limits_{c \in {\mathcal V}}(\Omega_p(c,c)+\Omega_p(c,c)+2\Omega_{pq}(c,c))$. Here $r_{\max}, x_{\max}$ are the maximum values of resistance, reactance in grid $\mathcal T$.
\end{theorem}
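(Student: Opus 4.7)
\subsection*{Proof plan for Theorem \ref{thm:order3}}

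The plan is to expand $\Omega_v(a,a)$ via Eq.~(\ref{volcovar1}) and then uniformly bound each factor of $H^{-1}_{1/r}$ and $H^{-1}_{1/x}$ using the closed-form expression in Eq.~(\ref{Hrxinv}). Under Assumption~$1$, the matrices $\Omega_p, \Omega_q, \Omega_{pq}$ are diagonal, so the diagonal entry collapses to
\begin{align*}
\Omega_v(a,a) = \sum_{d \in \mathcal V} \bigl[ &H^{-1}_{1/r}(a,d)^2\,\Omega_p(d,d) + H^{-1}_{1/x}(a,d)^2\,\Omega_q(d,d) \\
& + 2\,H^{-1}_{1/r}(a,d)\,H^{-1}_{1/x}(a,d)\,\Omega_{pq}(d,d)\bigr].
\end{align*}

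Next, I would bound the $H^{-1}$ factors. By Eq.~(\ref{Hrxinv}), $H^{-1}_{1/r}(a,d)$ is a sum of resistances over the edges of $\mathcal P^a \cap \mathcal P^d$. Since each path from the root has at most $d$ edges (the tree depth), the intersection has at most $d$ edges, and each resistance is bounded by $r_{\max}$. Hence $|H^{-1}_{1/r}(a,d)| \leq d\, r_{\max}$ and, by the same argument, $|H^{-1}_{1/x}(a,d)| \leq d\, x_{\max}$. For the cross term I would use $r_{\max} x_{\max} \leq \max(r_{\max}^2, x_{\max}^2)$, which allows all three summands to share the common factor $d^2 \max(r_{\max}^2, x_{\max}^2)$.

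Putting these bounds in and pulling the common factor outside the sum yields
\begin{align*}
\Omega_v(a,a) \leq d^2 \max(r_{\max}^2, x_{\max}^2)\!\sum_{d\in\mathcal V}\!\bigl(\Omega_p(d,d) + \Omega_q(d,d) + 2\Omega_{pq}(d,d)\bigr).
\end{align*}
Finally, upper bounding each summand by the $\max$ over nodes and the sum by $|\mathcal V|$ copies of that max gives $\Omega_v(a,a) \leq d^2 |\mathcal V| k_2$, as desired.

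There is no real obstacle here beyond careful bookkeeping: the estimate on $|H^{-1}_{1/r}(a,d)|$ is a direct consequence of Eq.~(\ref{Hrxinv}) and the depth assumption, and the cross term is handled by the AM--GM-style inequality $r_{\max} x_{\max} \leq \max(r_{\max}^2, x_{\max}^2)$. The bound is loose by design, since the result only needs to scale as $d^2 |\mathcal V|$ for use in the subsequent sample complexity argument; in particular, note that the non-negativity of the cross-covariance term $\Omega_{pq}(d,d) \geq 0$ in Assumption~$1$ makes the expansion well-behaved so that all terms contribute with a consistent sign and can be bounded in absolute value by the same $d^2 k_2$ factor.
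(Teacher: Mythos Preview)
Your proposal is correct and follows essentially the same route as the paper's proof: expand $\Omega_v(a,a)$ via Eq.~(\ref{volcovar1}) under Assumption~1, bound each $H^{-1}$ entry by $d\cdot r_{\max}$ (resp.\ $d\cdot x_{\max}$) using Eq.~(\ref{Hrxinv}) and the depth of the tree, pull out the common factor $d^2\max(r_{\max}^2,x_{\max}^2)$, and finish by replacing the sum over $|\mathcal V|$ nodes with $|\mathcal V|$ copies of the maximum summand. Your explicit mention of $r_{\max}x_{\max}\le\max(r_{\max}^2,x_{\max}^2)$ for the cross term is the only detail the paper leaves implicit.
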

\begin{proof}
Using Eq.~(\ref{volcovar1}), $\Omega_v(a,a)$ equals
$\sum_{c} {H^{-1}}^2_{1/r}(a,c)\Omega_{p}(c,c)$ $+{H^{-1}}^2_{1/x}(a,c)\Omega_q(c,c)$ $+2H^{-1}_{1/r}(a,c)H^{-1}_{1/x}(a,c)\Omega_{pq}(c,c)$. Using Eq.~(\ref{Hrxinv}) and definition of $r_{\max}, x_{\max}$ and depth $d$, we have $H^{-1}_{1/r}(a,c) \leq dr_{\max}$, $H^{-1}_{1/x}(a,c) \leq dx_{\max}$. Thus we have
\begin{align}
&\Omega_v(a,a)\nonumber\\
&\leq d^2\max(r^2_{\max},x^2_{\max}) \smashoperator[lr]{\sum_{c \in {\mathcal V}}}\Omega_p(c,c)+\Omega_p(c,c)+2\Omega_{pq}(c,c)\nonumber\\
&\leq d^2|\mathcal{V}|\max(r^2_{\max},x^2_{\max}) \smashoperator[lr]{\max_{c \in {\mathcal V}}}(\Omega_p(c,c)+\Omega_p(c,c)+2\Omega_{pq}(c,c)).\nonumber
\end{align}
\end{proof}
Note that constant $k_2$ is independent of the size of the network. Using the Union bound with Theorem \ref{thm:gauss}, empirical estimates of all voltage magnitude covariances $\hat{\Omega}_v$ from $n$ samples are bounded by
\begin{align}
 &\mathbb{P}\left[|\hat{\Omega}_v(a,b)-\Omega_v(a,b)| >\epsilon, \exists a,b \in {\mathcal V}\right]\leq \nonumber\\
 &4|\mathcal{V}|^2\exp{\left[-\frac{n\epsilon^2}{3200\max_a{\Omega^2(a,a)}}\right]}
 \leq 4|\mathcal{V}|^2\exp{\left[-\frac{n\epsilon^2}{3200d^4|\mathcal{V}|^2k^2_2}\right]}\nonumber
\end{align}
Consider $\epsilon =k_1/8$, and Eq.~(\ref{unionprob}), we have
\begin{align}
 &\mathbb{P}\left[|\hat{\phi}_{ab}-\phi_{ab}| >k_1/2, \exists a,b \in {\mathcal V}\right]
 \leq 4|\mathcal{V}|^2\exp{\left[-\frac{nk_1^2}{d^4|\mathcal{V}|^2k_3}\right]}\nonumber
\end{align}
where $k_3 = 3200k^2_28^2$ is independent of the grid size. Using Theorem \ref{thm:order2}, Algorithm \ref{alg:1} outputs the correct topology with probability at least $1-\eta$ if
\begin{align}
 \eta > 4|\mathcal{V}|^2\exp{\left[-\frac{nk_1^2}{d^4|\mathcal{V}|^2k_3}\right]}
 \Rightarrow n > Cd^4|\mathcal V|^2\log(|\mathcal V|/\eta)
\end{align}
where constant $C$ is independent of the grid size. This proves Theorem \ref{theorem:sample_complexity}.
\end{document}